\newcommand{\be}{\begin{equation}}
\newcommand{\ee}{\end{equation}}
\definecolor{pinegreen}{rgb}{0.0, 0.47, 0.44}
\theoremstyle{definition}
\newtheorem*{assumptions}{Assumptions}
\newtheorem{theorem}{Theorem}
\newtheorem{prop}{Proposition}
\newtheorem{lemma}{Lemma}
\theoremstyle{remark}
\newtheorem*{remark}{Remark}
\def\theequation{\thesection.\arabic{equation}}
\title{\bf Realisations of type~III stress-energy tensors of the 
Hawking-Ellis classification in scalar-tensor gravity} %
\author{Narayan~Banerjee$^{a}$\thanks{E-mail: narayan@iiserkol.ac.in},
Valerio~Faraoni$^{b}$\thanks{E-mail: vfaraoni@ubishops.ca}, 
Robert~Vanderwee$^{b}$\thanks{E-mail: rvanderwee20@ubishops.ca}, 
$\ $and
Andrea~Giusti$^{c}$\thanks{E-mail: agiusti@phys.ethz.ch}
\\
\\
$^a${\em Department of Physical Sciences,}\\ 
{\em Indian Institute of Science Education and Research Kolkata}
\\
{\em Mohanpur 741~246, West Bengal, India}
\\
\\
$^b${\em Department of Physics \& Astronomy, Bishop's University}
\\
{\em 2600 College Street, J1M~1Z7 Sherbrooke, Qu\'ebec, Canada}
\\
\\
$^c$ {\em Institute for Theoretical Physics, ETH Zurich}
\\
{\em Wolfgang-Pauli-Strasse 27, 8093 Zurich, Switzerland}
}
\begin{document}
\def\theequation{\arabic{section}.\arabic{equation}} 

\maketitle
\begin{abstract}
The ``ugly duckling'' of the Segr\'e-Pleba\'nski-Hawking-Ellis classification of stress-energy tensors is  believed to be either impossible or extremely difficult to realise in Einstein gravity. {\em Effective} stress-energy tensors in alternative gravity offer a wider range of possibilities. We report a class of type~III realisations in ``first-generation'' scalar-tensor and in Horndeski gravity, and their physical interpretation.  The ugly duckling may be a freak of nature of limited importance but it is not physically impossible. 
\end{abstract}
\newpage
\section{Introduction}
\label{sec:1}
\setcounter{equation}{0}

In general relativity, the right hand-side of the Einstein field equations 
\be
G_{ab} := R_{ab}-\frac{1}{2} \,  g_{ab} R =  T_{ab} \label{EFE}
\ee
(where $R_{ab}$ and $R:= {R^a}_a$ are the Ricci tensor and the Ricci 
scalar of the spacetime metric $g_{ab}$, respectively\footnote{We follow 
the notation of Ref.~\cite{Waldbook}, in which the metric signature is 
$({-}{+}{+}{+})$. Units are used in which the speed of light and $8\pi G$ 
(where $G$ is Newton's constant) are unity.}) contains the 
energy-momentum tensor of matter $T_{ab}^\mathrm{(m)} $. The possible 
forms of $T_{ab}^\mathrm{(m)}$ expected on physical grounds have been 
classified by Segre \cite{Segre84}, Pleba\'nski \cite{Plebanski}, and 
Hawking \& Ellis \cite{HawkingEllis}. The least known and least studied 
type in this classification is type~III, in which the stress-energy 
tensor has the form 
\be
    T_{ab}^\mathrm{(m)} = \rho k_a k_b + q_a k_b + q_b k_a
\ee  
where $k^a$ is a null vector field and $q^a$ is spacelike. Because of its 
unknown nature and unfamiliar properties, the type~III stress-energy 
tensor has been named the ``ugly duckling of the Hawking-Ellis 
classification'' \cite{Martin-Moruno:2018coa, Martin-Moruno:2019kzc}. It 
was believed, although without any firm ground, that type~III 
stress-energy tensors are unphysical until Podolsk\'y, \v{S}varc \& 
Maeda
\cite{Podolsky:2018zha}, Martin-Moruno \& Visser 
\cite{Martin-Moruno:2018coa, Martin-Moruno:2019kzc}, and Maeda 
\cite{Maeda:2023oxl} provided examples in 
which a gyraton or exotic Lagrangians realize this particular 
energy-momentum tensor in Einstein gravity.

One could think of obtaining the null vector field $k^a$ of the type~III 
stress-energy tensor as the gradient of a scalar field $\phi$ 
satisfying 
$\nabla_c \phi \nabla^c \phi=0$. It is straightforward to see that only a 
null dust can be obtained in this way in general relativity. However, one 
can turn to alternative theories of gravity with a built-in scalar.

In alternative gravity, the field equations are often  rewritten by moving geometric terms, or terms built out of the extra gravitational degrees of freedom and their derivatives, to the right-hand side of the field equations to make them look like effective Einstein equations~(\ref{EFE}), thus building  an {\em effective} stress-energy tensor $T_{ab}$, which is formally treated as a mass-energy source of curvature, although it does not describe real matter. These effective stress-energy tensors may provide incarnations of the type~III energy-momentum tensor that are difficult to realize explicitly in Einstein gravity. 

The prototype of the alternative theory of gravity is scalar-tensor 
gravity \cite{Bergmann:1968ve, Brans:1961sx, Nordtvedt:1968qs, 
Wagoner:1970vr}, in which a propagating gravitational scalar field $\phi$ 
is added to the usual massless spin-two modes of Einstein gravity 
contained in the metric tensor. Scalar-tensor gravity is the subject of a 
vast literature and was generalized long ago by Horndeski 
\cite{Horndeski}. His theory went largely unnoticed for many years and was 
then rediscovered in the quest for the most general scalar-tensor theory 
of gravity with second order equations of motion. Although this feature 
was eventually found to be a property of the more general Degenerate 
Higher-Order Scalar-Tensor (DHOST) theories 
(\cite{DHOST1, DHOST2, DHOST3, 
DHOST4, DHOST5, DHOST6, DHOST7}, see \cite{DHOSTreview1, DHOSTreview2} for 
reviews), Horndeski gravity has become the subject of a vast literature 
spanning the last decade ({\em e.g.}, \cite{H1,H2, H3, GLPV1, GLPV2, 
Creminellietal18, Langloisetal18, Langlois:2018dxi, Kobayashi:2011nu}).

Here we analyze ``old'' scalar-tensor gravity first, and then Horndeski 
gravity, and we report a class of possible implementations of the ugly 
duckling type~III {\em effective} stress-energy tensor in these theories.

Independent motivation for our study comes from a completely different 
direction. Recently, the analogy between the effective stress-energy 
tensor of scalar-tensor and viable Horndeski gravity and a dissipative 
(Eckart) fluid has led to introducing an effective ``temperature of 
gravity'' with respect to general relativity and to equations describing 
the approach of alternative gravity to general relativity (or its 
departures from it) \cite{Faraoni:2018qdr, Faraoni:2021lfc,
Faraoni:2021jri,  Giusti:2021sku, Giardino:2022sdv,Giusti:2022tgq, 
Faraoni:2022gry, Miranda:2022wkz, Miranda:2022uyk, Faraoni:2022doe, 
Faraoni:2023hwu, Faraoni:2022jyd, Faraoni:2022fxo, Giardino:2023ygc}. This 
formalism, dubbed ``first-order thermodynamics of scalar-tensor gravity'' 
is subject to the fundamental limitation that the gradient of the 
Brans-Dicke-like scalar field be timelike and future-oriented. We would 
like to extend this formalism to situations in which $\nabla_a \phi$ is 
lightlike instead. As we will see, the goal of 
introducing an effective temperature of gravity turns out to be 
impossible 
but in the process we discover new implementations of the type~III 
stress-energy tensor, only for effective instead of real fluids.

Let us proceed with the definition of this effective energy-momentum 
tensor. The (Jordan frame) action of ``first-generation'' scalar-tensor 
gravity is \cite{Brans:1961sx,Bergmann:1968ve, Nordtvedt:1968qs, 
Wagoner:1970vr}

\be 
\label{eq:BDaction}
S= \int d^4 x \, \sqrt{-g} \left[ \phi R -\frac{\omega(\phi)}{\phi} \, 
\nabla^c \phi \nabla_c \phi -V(\phi) \right] + S^\mathrm{(m)} \,,
\ee 
where $R$ is the Ricci scalar of the spacetime metric $g_{ab}$ with 
determinant $g$ and covariant derivative $\nabla_a$, $\phi$ is the Brans-Dicke-like gravitational scalar with potential $V(\phi)$, $\omega(\phi)$ is the ``Brans-Dicke coupling'', and $S^\mathrm{(m)}$ is the matter action. The corresponding field equations are 
\cite{Brans:1961sx, Bergmann:1968ve, Nordtvedt:1968qs, Wagoner:1970vr}
\be
    G_{ab}= \frac{ T_{ab}^\mathrm{(m)}}{\phi} +\frac{\omega}{\phi^2} 
    \left( \nabla_a \phi \nabla_b \phi -\frac{1}{2} \, g_{ab} \nabla^c \phi 
    \nabla_c\phi \right)
    +\frac{1}{\phi} \left( \nabla_a \nabla_b \phi -g_{ab} \Box \phi \right) 
    -\frac{V}{2\phi} \, g_{ab} \,, \label{fe1}
    \ee
    
    \be
    \left( 2\omega+3 \right) \Box \phi =8\pi T^\mathrm{(m)} +\phi \, 
    V' -2V -  \omega' \, \nabla^c \phi \nabla_c\phi 
    \,, \label{wavelike}
\ee
where $T_{ab}^\mathrm{(m)}$ is the matter stress-energy tensor, $\Box 
\equiv \nabla^c \nabla_c$, and a prime denotes differentiation with 
respect to $\phi$.  In the next section we restrict ourselves to 
solutions 
of these field equations with the property that the scalar field gradient 
is null, $\nabla^c \phi \nabla_c\phi=0$. Examples of such solutions are 
reported in 
Refs.~\cite{Tupper74, Ray:1977nr, Gurses:1978zs, Bressange:1997ey, 
Gurses:2016vwm, Racsko:2018xcw, Siddhant:2020gkn, Gurses:2021ogc}.

\section{Null scalar field gradient in first-generation scalar-tensor gravity}
\label{sec:2}
\setcounter{equation}{0}
Let $\phi$ be the Brans-Dicke scalar field and $k_a := \nabla_a 
\phi$. We shall now present a few preliminary results.
\begin{lemma} \label{lemma-1}
Consider the Brans-Dicke theory \eqref{eq:BDaction} with $\omega \neq 
-3/2$, $V=0$, and {\em in vacuo}
($T_{ab}^\mathrm{(m)}=0$). If $k^a$ 
is a null vector field ({\em i.e.}, $k_a k^a=0$), then $\Box \phi = 0$.
\end{lemma}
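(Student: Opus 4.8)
The plan is to read the conclusion straight off the scalar equation of motion~(\ref{wavelike}), which acts as an algebraic constraint on $\Box\phi$ once the matter and potential contributions are switched off. First I would impose the three hypotheses of the lemma in that equation: setting $V=0$ annihilates both the $\phi\,V'$ and the $-2V$ terms, while $T_{ab}^\mathrm{(m)}=0$ forces its trace $T^\mathrm{(m)}={T^a}_a{}^\mathrm{(m)}$ to vanish, so that term drops as well. What survives on the right-hand side is then only the single term $-\omega'\,\nabla^c\phi\nabla_c\phi$.

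Next I would observe that the null condition $k_a k^a=0$, with $k_a:=\nabla_a\phi$, is precisely the statement $\nabla^c\phi\nabla_c\phi=0$. Hence the remaining term $-\omega'\,\nabla^c\phi\nabla_c\phi$ vanishes identically — note that this holds regardless of the functional form of $\omega(\phi)$, so the non-constancy of the Brans--Dicke coupling plays no role here. Equation~(\ref{wavelike}) therefore collapses to $\left(2\omega+3\right)\Box\phi=0$, and since the hypothesis $\omega\neq-3/2$ guarantees $2\omega+3\neq0$ throughout the spacetime, dividing through yields $\Box\phi=0$, as claimed.

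There is no genuine obstacle in this argument; the only care needed is the bookkeeping of which source terms vanish under the stated hypotheses, together with the remark that the $\omega'$ term is harmless precisely because it carries the null invariant $\nabla^c\phi\nabla_c\phi$ as an overall factor. As a consistency check one could trace the metric field equations~(\ref{fe1}) and verify that, after using $\Box\phi=0$, they reproduce the same scalar equation, but this is not needed for the proof.
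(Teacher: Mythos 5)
Your proof is correct and follows exactly the paper's own argument: specialise the scalar field equation~(\ref{wavelike}) to $V=0$ and vacuum, note that the surviving $-\omega'\,\nabla^c\phi\nabla_c\phi$ term vanishes because the gradient is null, and divide by $2\omega+3\neq 0$. Nothing further is needed.
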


\begin{proof}
The field equation for the Brans-Dicke scalar field (\ref{wavelike}) for 
$\omega \neq -3/2$, $V=0$, and  {\em in vacuo}  reduces to 
$$
\left( 2\omega+3 \right) \Box \phi = -  \omega' \, \nabla^c \phi 
\nabla_c\phi \, .
$$
If now the scalar field gradient is null, then $\nabla^c \phi 
\nabla_c\phi = k_c k^c = 0$, that implies $\left( 2\omega+3 \right) \Box 
\phi = 0$, or simply $\Box \phi = 0$.
\end{proof}

\begin{prop} \label{prop-2}
Consider the Brans-Dicke theory \eqref{eq:BDaction} with $\omega \neq 
-3/2$, $V=0$, and {\em in vacuo}
($T_{ab}^\mathrm{(m)}=0$). 
If $k^a$ is a null vector field, then $k^a$ is a geodesic vector field and the corresponding geodesic is affinely parametrised.
\end{prop}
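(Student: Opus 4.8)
The plan is to exploit the fact that $k_a=\nabla_a\phi$ is a \emph{gradient}, so that $\nabla_a k_b=\nabla_a\nabla_b\phi$ is automatically symmetric, together with the two pieces of information just established: $k_ak^a=0$ (by hypothesis) and $\Box\phi=0$ (by Lemma~\ref{lemma-1}). The geodesic equation for $k^a$ with affine parametrisation is $k^b\nabla_b k^a=0$, so the whole statement reduces to showing that this particular vector field obeys that equation.

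First I would differentiate the null condition $k_ck^c=0$. Since $k^c\nabla_c\phi=k^ck_c=0$ identically, taking $\nabla_a$ gives $\nabla_a(k_ck^c)=2k^c\nabla_a k_c=0$, i.e. $k^c\nabla_a\nabla_c\phi=0$. Now use the symmetry of the Hessian, $\nabla_a\nabla_c\phi=\nabla_c\nabla_a\phi$, to rewrite this as $k^c\nabla_c\nabla_a\phi=0$, which is precisely $k^c\nabla_c k_a=0$. Raising the index yields $k^c\nabla_c k^a=0$, the affinely parametrised geodesic equation. So the argument is essentially a two-line manipulation: differentiate $k_ak^a=0$, commute the two derivatives (legitimate because $k_a$ is exact), and read off the result.

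There is no serious obstacle here; the only point requiring a word of care is that the computation genuinely uses $k_a=\nabla_a\phi$ and not merely $k_ak^a=0$ — for a general null vector field, $\nabla_a(k_ck^c)=0$ gives $k^c\nabla_a k_c=0$, but without the symmetry of $\nabla_a k_c$ one cannot convert this into $k^c\nabla_c k_a$. Thus the role of the scalar-tensor setup is simply to guarantee that the relevant null vector is a gradient; the hypotheses $\omega\neq -3/2$, $V=0$ and $T_{ab}^{\mathrm{(m)}}=0$ enter only through Lemma~\ref{lemma-1}, and in fact $\Box\phi=0$ is not even needed for this particular conclusion — it is the exactness of $k_a$ that does all the work, so I would mention that the wave equation will instead be used in the subsequent analysis of the effective stress-energy tensor.
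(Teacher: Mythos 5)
Your proof is correct and is essentially identical to the paper's: both differentiate the null condition $k_ak^a=0$ and use the symmetry of the Hessian $\nabla_a\nabla_b\phi$ to identify $k^c\nabla_c k_a$ with $k^c\nabla_a k_c=0$. Your closing observation that only the exactness of $k_a$ and the null condition are used (not $\Box\phi=0$ or the field equations) matches the remark the paper itself makes immediately after the proposition.
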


\begin{proof}
Recalling the definition $k_a := \nabla_a \phi$ one has that
$$
k^a \nabla_a k_b = k^a \nabla_a \nabla_b \phi = k^a \nabla_b \nabla_a \phi = k^a \nabla_b k_a \, .
$$
Differentiating $k^a k_a = 0$ one gets $k^a \nabla_b k_a = 0$ that, together with the previous equation, allows one to conclude that $k^a \nabla_a k_b = k^a \nabla_b k_a = 0$. In other words, $k^a$ is a null geodesic vector field and the null geodesic curve to which it is tangent is affinely parametrised.
\end{proof}

\begin{remark}: This property holds for both minimally and 
non-minimally coupled scalar fields $\phi$ irrespective of whether $\phi$ is a test field or a 
gravitating one. \end{remark}

Consider now the congruence of null geodesics with tangent field $k_a 
=\nabla_a \phi$. Following standard procedure (see, {\em e.g.}, 
\cite{Poisson:2009pwt}) and adapting to the special situation in which the 
null field $k^a$ is a gradient, we define another (non-unique) null vector 
field $n^a$ normalized so that
\be \label{defn}
    k_a n^a =-1 \, .
\ee
The covariant differentiation of this equation leads to the useful relation
\be
    n_b \nabla_a k^b = - k_b \nabla_a n^b \, .
\ee
The 2-metric transverse  to both $k^a$ and $n^b$ is defined as
\be
    h_{ab} :=  g_{ab} + k_a n_b + k_b n_a \, ; \label{hab}
\ee
$h_{ab}$ satisfies
\be
    h_{ab} k^a = h_{ab} k^b = h_{ab} n^a = h_{ab} n^b = 0 \,,
    \ee
is 2-dimensional ($ {h^a}_a = 2 $),  and
\be
{h^a}_c \, {h^c}_b = {h^a}_b \, .
\ee
Defining now 
\be
    B_{ab} := \nabla_b k_a \,,
\ee
it follows immediately  that $B_{ab}$ is symmetric,
\be
    B_{ab} = \nabla_b \nabla_a \phi =
    \nabla_a \nabla_b \phi = B_{ba} = B_{(ab)} 
\ee
which, in conjunction with the fact that $k^a$ is geodesic, implies that 
$B_{ab}$ is transverse to $k^a$, {\em i.e.},
\be
    B_{ab}k^b = 
    B_{ab}k^a = 0 \, . \label{eq:transversality1}
\ee

Next, we project $B_{ab}$ onto the 2-space orthogonal to both $k^a$ and 
$n^a$, obtaining
\be 
    \tilde{B}_{ab} := {h_a}^c {h_b}^d B_{cd} \, .
\ee
We can write this quantity explicitly in terms of $k^a$ and $n^a$ using Eq.~(\ref{hab}), which gives
\be
    \tilde{B}_{ab} = B_{ab} + k_a n^c B_{cb} + k_b n^c B_{ac} + k_a k_b \left( n^c n^d B_{cd} \right) \, . \label{tildeB}
\ee
As with any rank two tensor, $\tilde{B}_{ab}$ can be decomposed into its symmetric and anti-symmetric parts,
\be
    \tilde{B}_{ab} = \Theta_{ab} + \tilde{\omega}_{ab} \, ,
\ee
\be
    \Theta_{ab} := \tilde{B}_{(ab)} = \frac{\Theta}{2} \, h_{ab} +  
\tilde{\sigma}_{ab} \, , \quad \tilde{\omega}_{ab} := \tilde{B}_{[ab]} = 
0 \,, \ee
where $\Theta \equiv \Theta^a_{\phantom{a}a} =\nabla_c k^c$ is the 
expansion scalar while $\tilde{\sigma}_{ab}$ is the shear tensor, which is 
the symmetric, trace-free part of $\tilde{B}_{ab}$, and the vorticity $ 
\tilde{\omega}_{ab}$ vanishes identically because $k_a$ is a gradient 
\cite{Faraoni:2021jri}.

From this point on let us restrict our attention to the following case.
\begin{assumptions}[$\star$]
{\em Vacuum} Brans-Dicke gravity with $V = 0$, $\omega \neq -3/2$, and 
$k_c=\nabla_c \phi$ a null gradient of the Brans-Dicke scalar field.
\end{assumptions}

In light of {\bf Lemma \ref{lemma-1}} we have  
$\Theta = \nabla_c k^c= \Box \phi = 0$, then it holds that
\be 
    \tilde{B}_{ab} = \tilde{\sigma}_{ab} \, .
\ee

This particular scenario of Brans-Dicke gravity admits the possibility 
that $k^a$ is a Killing vector field. In this case, $B_{ab}=0$ and 
$T_{ab}=\frac{\omega}{\phi^2} \, k_a k_b$ describes a pressureless null 
dust \cite{Faraoni:2018fil}. We do not consider this very special 
situation further.

In general,  the scalar field effective stress-energy  tensor for our 
subclass of Brans-Dicke gravity $\bm{(\star)}$ reads
\be
    T_{ab} = \frac{\omega}{\phi^2} \nabla_a \phi \nabla_b \phi + 
\frac{\nabla_a \nabla_b \phi}{\phi} = \frac{\omega}{\phi^2} k_a k_b + 
\frac{B_{ab}}{\phi}
\ee
and, using Eq.~(\ref{tildeB}), one can write this $T_{ab}$ in the form
\be
    T_{ab} = \left[\frac{\omega}{\phi^2} - \left(\frac{n^c n^d B_{cd} 
}{\phi} \right)\right] k_a k_b + \frac{\tilde{B}_{ab}}{\phi} + q_a k_b + 
q_b k_a \, , \label{SEtensor}
\ee
where
\be
\label{def-q}
     q_a := -\frac{n^c B_{ca}}{\phi} =-\frac{n^c \nabla_a k_c}{\phi} = 
\frac{k^c \nabla_a n_c}{\phi}  \, .
\ee
Note that
\be \label{eq:orthoqk}
q_a k^a=0 
\ee
because of the $k$-tranversality of $B_{ab}$. 

\begin{prop} \label{prop-3}
Let $u^a$ be a null vector field and let $v^a$ be a vector field 
orthogonal to $u^a$, {\em i.e.}, $u^a v_a = 0$. Then $v^a$ is either 
parallel to $u^a$ or it is spacelike.
\end{prop}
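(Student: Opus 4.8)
The plan is to reuse the null-frame machinery already set up above. Given the null vector field $u^a$, I would first introduce an auxiliary null vector field $n^a$ normalised by $u_a n^a = -1$, exactly as in \eqref{defn}, together with the transverse 2-metric $h_{ab} := g_{ab} + u_a n_b + u_b n_a$, which annihilates $u^a$ and $n^a$ and is positive-definite on its 2-dimensional image. This last fact is the one place where the Lorentzian signature genuinely enters: the Gram matrix of $g_{ab}$ restricted to $\mathrm{span}(u,n)$ in the basis $\{u^a,n^a\}$ is $\left(\begin{smallmatrix} 0 & -1 \\ -1 & 0 \end{smallmatrix}\right)$, which has signature $(1,1)$, so that the orthogonal transverse block must carry signature $(0,2)$, i.e.\ be positive-definite. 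Any vector field then decomposes as
\[
v^a = -(v^b n_b)\, u^a - (v^b u_b)\, n^a + \tilde v^a \,, \qquad \tilde v^a := {h^a}_{\ b}\, v^b \,,
\]
as one verifies by contracting in turn with $u_a$ and $n_a$ and using $u_a n^a = -1$, $u_a u^a = n_a n^a = 0$, and the transversality $u_a \tilde v^a = n_a \tilde v^a = 0$.

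Next I would impose the hypothesis. The orthogonality condition $u^a v_a = 0$ kills the $n^a$-component, leaving $v^a = -(v^b n_b)\, u^a + \tilde v^a$ with $\tilde v^a$ purely transverse. Computing the norm and using $u_a u^a = 0$ together with $u_a \tilde v^a = 0$, all the cross terms and the $u^a u_a$ term drop out, leaving
\[
v_a v^a = \tilde v_a \tilde v^a = h_{ab}\, v^a v^b \ \geq\ 0 \,,
\]
so $v^a$ can never be timelike. By the positive-definiteness of $h_{ab}$ on the transverse space, $v_a v^a = 0$ forces $\tilde v^a = 0$, whence $v^a = -(v^b n_b)\, u^a$ is proportional to $u^a$; conversely, if $v^a$ is parallel to $u^a$ it is obviously null. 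Therefore, whenever $v^a$ is \emph{not} parallel to $u^a$ one must have $v_a v^a > 0$, i.e.\ $v^a$ is spacelike, which is the assertion.

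The only step requiring any care is the positive-definiteness of $h_{ab}$ on the 2-space orthogonal to $u^a$ and $n^a$; everything else is bookkeeping. Equivalently, one can dispense with the frame and argue by trichotomy on the sign of $v_a v^a$: a timelike $v^a$ would have a spacelike (positive-definite) three-dimensional orthogonal complement, which cannot contain the nonzero null vector $u^a$; and if $v^a$ were null and linearly independent of $u^a$, then $\mathrm{span}(u,v)$ would be a two-dimensional totally null subspace, since $(a u + b v)_c (a u + b v)^c = 2ab\, u_c v^c = 0$ for all $a,b$, which is impossible in a four-dimensional Lorentzian spacetime. Either route is short; I would present the frame version to stay consistent with the notation introduced just above.
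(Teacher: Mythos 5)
Your proof is correct, but it follows a genuinely different route from the paper's. The paper argues pointwise in an adapted local inertial frame: it writes $u^\mu=(u,0,0,u)$, deduces from $\eta_{\mu\nu}u^\mu v^\nu=0$ that $v^\mu=(v,v^1,v^2,v)$, and reads off $v_\mu v^\mu=(v^1)^2+(v^2)^2>0$ unless the transverse components vanish (in which case $v^a\propto u^a$). You instead work covariantly with the null dyad $\{u^a,n^a\}$ and the transverse projector $h_{ab}=g_{ab}+u_an_b+u_bn_a$ already introduced in Sec.~\ref{sec:2}: orthogonality kills the $n^a$-component of $v^a$, so $v_av^a=h_{ab}v^av^b\geq 0$, with equality forcing $v^a\parallel u^a$. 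The two arguments are equivalent in content --- your positive-definiteness of $h_{ab}$ on the transverse $2$-plane, obtained from the $(1,1)$ signature of the Gram matrix $\bigl(\begin{smallmatrix}0&-1\\-1&0\end{smallmatrix}\bigr)$ of $\mathrm{span}(u,n)$, is exactly what the explicit frame computation exhibits as $(v^1)^2+(v^2)^2$. What your version buys is coordinate-independence, explicit isolation of where the Lorentzian signature enters, and consistency with the $n^a$, $h_{ab}$ machinery used immediately afterwards in Eq.~(\ref{astrominchia}); what the paper's version buys is brevity and no need to invoke the (non-unique) auxiliary vector $n^a$, which is extraneous to the statement being proved. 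Your frame-free alternative via trichotomy (no two-dimensional totally null subspace in Lorentzian signature) is also sound and is arguably the cleanest formulation, since it makes no auxiliary choices at all.
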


\begin{proof}
If $v^a$ is parallel to $u^a$, namely $\exists \alpha \in \mathbb{R}$ 
such that $v^a = \alpha \, u^a$, then $u^a v_a = \alpha \, u^a u_a = 0$.

\noindent If $v^a$ is not parallel to $u^a$ we have to show that $v^a v_a 
> 0$. Consider a local inertial frame at each spacetime point $p$ where 
$u^a v_a =0$ such that $u^\mu = (u, 0, 0, u)$, with $u\in \mathbb{R}$ in this chart. In this local inertial frame the components of $v^a$ will read, in general, as $v^\mu = (v^0, v^1, v^2, v^3)$, with $v^0, \ldots, v^3 \in \mathbb{R}$. The orthogonality condition $u^a v_a = 0$ at $p$ in the local frame reads
$$
0 = \eta _{\mu \nu} \, u^\mu v^\nu = - u v^0 + u v^3  \, , 
$$ 
with $\eta_{\mu \nu}$ denoting the Minkowski metric. Therefore in the 
local frame at each spacetime point where $u^a v_a =0$ one has that 
$v^\mu = (v, v^1, v^2, v)$ with $v \in \mathbb{R}$. This implies that 
$v^\mu v_\mu = (v^1)^2 + (v^2)^2 > 0$ at $p$. In other words, at each 
spacetime point $p$ where $u^a v_a =0$ it holds that $v^a v_a > 0$, if 
$v^a$ is not parallel to $u^a$.
\end{proof}

Therefore, Eq.~\eqref{eq:orthoqk} and {\bf Prop. \ref{prop-3}} tells us 
that $q^a$ is either parallel to $k^a$ or $q^a$ must be spacelike.

\subsection{Case 1: $q^a$ parallel to $k^a$}
\label{subs:Case1}
First we observe that:
\begin{lemma}
Given {\bf Assumption} $\bm{(\star)}$, if $q_a := - n^c B_{ca} / \phi$ is parallel to 
$k^b$ then $q^a = -\left( q_c n^c \right)  k^a$.
\end{lemma}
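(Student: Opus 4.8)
The plan is to use nothing more than the normalisation $k_a n^a = -1$ from Eq.~\eqref{defn} together with a single contraction. The statement ``$q_a$ is parallel to $k^b$'' means there is a scalar field $\alpha$ on spacetime (a priori just a function of the point, not a constant) with $q^a = \alpha\, k^a$. First I would record this hypothesis in that form. Next I would contract both sides of $q^a = \alpha\, k^a$ with $n_a$, obtaining
\be
q_a n^a = \alpha\, k_a n^a = -\alpha \, ,
\ee
where the second equality is precisely Eq.~\eqref{defn}. Hence $\alpha = -\, q_c n^c$. Substituting this value of $\alpha$ back into $q^a = \alpha\, k^a$ gives $q^a = -\left( q_c n^c \right) k^a$, which is the claim.

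The only point worth flagging is that the proportionality coefficient $\alpha$ is not assumed constant; it is a field, and the auxiliary normalisation condition \eqref{defn} is exactly what pins it down pointwise, so no integration or global argument is needed. I do not expect any genuine obstacle: the computation is algebraic and local. One may also note as a consistency check that the resulting expression is compatible with the orthogonality $q_a k^a = 0$ of Eq.~\eqref{eq:orthoqk}, since $k^a$ is null and therefore $-(q_c n^c)\, k^a k_a = 0$ automatically; and that the value $\alpha = -q_c n^c$ is independent of the (non-unique) choice of $n^a$ only up to the corresponding rescaling, which is harmless because it enters the final formula symmetrically through $q_c n^c$ and $k^a$.
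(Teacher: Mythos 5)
Your proof is correct and follows essentially the same route as the paper's: write $q^a = \alpha\, k^a$, contract with $n_a$, and use $k_a n^a = -1$ to identify $\alpha = -q_c n^c$. Your remark that $\alpha$ is a pointwise scalar function rather than a constant is a slight (and welcome) refinement of the paper's phrasing ``$\exists \alpha \in \mathbb{R}$'', but the argument is the same.
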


\begin{proof}
Let $q^a$ be parallel to $k^a$, this means that $\exists \alpha 
\in  \mathbb{R}$ such that $q^a = \alpha k^a$. From the definition of 
$n^a$,  Eq.~\eqref{defn}, one has that $n^a q_a = \alpha n^a k_a = - 
\alpha$. Hence $q^a = -\left( q_c n^c \right) k^a$.
\end{proof}

The effective stress-energy tensor~(\ref{SEtensor}) of the scalar field, 
given the subclass of Brans-Dicke gravity $\bm{(\star)}$, becomes
\be \label{218}
    T_{ab} = \left[\frac{\omega}{\phi^2} - (q_c n^c)\right] k_a k_b + 
\frac{\tilde{B}_{ab}}{\phi} 
\ee
with energy density 
\be
    \rho =  \frac{\omega}{\phi^2} - \left( q_c n^c \right) \, ,
\ee
and 
\be
    \frac{\tilde{B}_{ab}}{\phi} = \frac{\tilde{\sigma}_{ab}}{\phi}  \, .
\ee
This allows one to introduce an effective, trace-free, anisotropic stress 
tensor
\be
\tilde{\pi}_{ab} := \frac{\tilde{\sigma}_{ab}}{\phi} = -2 \eta \, 
\tilde{\sigma}_{ab} \, ,
\ee
where $\eta = - 1/ \left( 2\phi \right) $ is an effective  shear 
viscosity 
coefficient. The effective energy-momentum tensor \eqref{218} of $\phi$ 
then takes the form of a null fluid:
\be
T_{ab}= \rho k_a k_b + \tilde{\pi}_{ab} \, . \label{SEnullfluid}
\ee
One can always diagonalize the anisotropic stress tensor  
$\tilde{\pi}_{ab}$ via a rotation of axes since $\tilde{\sigma}_{ab}$ is 
a symmetric tensor in a Riemannian 2-dimensional space. 
There are two spacelike vectors $x^a$ and $y^a$ in this 2-space 
orthogonal to both $k^a$ and $n^a$ (with 
metric $h_{ab}$) such that
\be
    x_c k^c = y_c k^c = x_c n^c = y_c n^c = x_c y^c = 0 \,  
\ee
and
\be
    x_c x^c = y_c y^c = 1
\ee
for which $\tilde{\pi}_{ab}$ is diagonal. In this coordinate system, the 
effective stress-energy  tensor has the null fluid form \cite{Waldbook}
\be
    T_{ab} = \rho k_a k_b + P_1 \left( x_a x_b - y_a y_b \right)  \,.
\ee
In this case the effective stress-energy tensor of $\phi$ describes a 
null fluid with 
anisotropic stresses $ \tilde{\pi}_{ab}$ satisfying the constitutive 
relation of a Newtonian fluid $ \tilde{\pi}_{ab} 
= -2\eta \tilde{\sigma}_{ab}$, with shear viscosity $\eta = -1/\left( 
2\phi \right) $, no heat conduction, and 
vanishing trace ${T^a}_a$. 

We shall now show that the shear can be eliminated, reducing this null 
fluid to a null dust (type~II in the Hawking-Ellis classification 
\cite{HawkingEllis}).

Consider a congruence of null geodesics with tangent $k_a=\nabla_a \phi$. 
The proof of this second statement uses the Raychauduri equation for null 
geodesic congruences \cite{Waldbook,HawkingEllis}
\be
    \frac{d\Theta}{d\lambda} = -\frac{\Theta^2}{2} - \tilde{\sigma}_{ab} 
\tilde{\sigma}^{ab} + \tilde{\omega}_{ab} \tilde{\omega}^{ab} - R_{ab} 
k^a k^b \,, \label{Raychaudhuri}
\ee
where $\lambda$ is an affine parameter along the null geodesics. In our 
case, the null geodesic congruence with tangent field $k^a$ has $\Theta = 
0$ and $ d\Theta/ d\lambda = 0$ everywhere, hence any pair of initially 
parallel geodesics remains parallel. The vorticity $ \tilde{\omega}_{ab}$ 
also vanishes identically. Since we are assuming vacuum, the trace of the 
matter stress-energy tensor $T^{\mathrm{(m)}}_{ab}$ does not contribute to 
the Ricci scalar, and $\nabla_c \phi\nabla^c\phi=0$, $V=0$ and also $ 
\Box\phi=0$ ({\bf Lemma \ref{lemma-1}}). Contracting the vacuum field 
equation~(\ref{fe1}) yields $R=0$ and the effective Einstein 
equation~(\ref{fe1}) contracted twice with $k^a$ gives
\begin{align}
    R_{ab} k^a k^b 
    & = T_{ab} k^a k^b \\ \nonumber
    & = \left(\rho k_a k_b + \frac{\tilde{B}_{ab}}{\phi} + q_a k_b + q_b k_a\right) k^a k^b = 0 
\end{align}
due to the lightlike nature of $k^c$ and the 
transversality~(\ref{eq:transversality1}) of $\tilde{B}_{ab}$ to $k^c$. To 
conclude, the Raychaudhuri equation~(\ref{Raychaudhuri}) yields 
$2 \tilde{\sigma}^2 := \, \tilde{\sigma}_{ab} \tilde{\sigma}^{ab}=0$ 
everywhere, which implies that, since $\tilde{\sigma}^2$ is positive 
definite, all components of the shear $\tilde{\sigma}_{ab}$ vanish 
identically \cite{Ellis:1971pg} and the effective stress-energy tensor 
\eqref{218} of $\phi$ reduces to
\be
    T_{ab} = \rho k_a k_b \,,
\ee
which is of type~II in the Hawking-Ellis classification system.

These results can therefore be summarised in the following theorem.

\begin{theorem}
Given {\bf Assumption} $\bm{(\star)}$, if $q^a$ is parallel 
to $k^a$ then the 
effective stress-energy tensor~(\ref{SEtensor}) of the scalar field 
reduces to the stress-energy tensor of a null dust, which belongs to the type~II family of the Hawking-Ellis classification.
\end{theorem}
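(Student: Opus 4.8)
The plan is to combine the algebraic simplification afforded by the lemma just proved (a $k$-parallel $q^a$ necessarily equals $-(q_c n^c)\,k^a$) with the Raychaudhuri equation for the null geodesic congruence generated by $k^a$. First I would use that lemma to absorb the cross terms $q_a k_b + q_b k_a$ in \eqref{SEtensor} into the $k_a k_b$ piece, bringing the effective stress-energy tensor to the form $T_{ab} = \rho\, k_a k_b + \tilde{B}_{ab}/\phi$ with $\rho = \omega/\phi^2 - q_c n^c$. Since $\Theta = \Box\phi = 0$ by \textbf{Lemma \ref{lemma-1}} and the vorticity vanishes because $k_a$ is a gradient, one has $\tilde{B}_{ab} = \tilde{\sigma}_{ab}$, so the only obstruction to $T_{ab}$ being a pure null dust is the shear term $\tilde{\sigma}_{ab}/\phi$.

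The core of the argument is therefore to show $\tilde{\sigma}_{ab}\equiv 0$. I would feed the congruence with tangent $k^a$ into the Raychaudhuri equation \eqref{Raychaudhuri}; with $\Theta \equiv 0$ and $\tilde{\omega}_{ab}\equiv 0$ it collapses to $\tilde{\sigma}_{ab}\tilde{\sigma}^{ab} = -R_{ab}k^a k^b$. It then remains to establish $R_{ab}k^a k^b = 0$. For this I would trace the vacuum field equation \eqref{fe1}: with $T_{ab}^\mathrm{(m)}=0$, $V=0$, $\Box\phi=0$ and $\nabla^c\phi\nabla_c\phi=0$ the trace gives $R=0$, so \eqref{fe1} reduces to $R_{ab}=T_{ab}$. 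Contracting this twice with $k^a$ and using the nullity $k_a k^a=0$ together with the $k$-transversality \eqref{eq:transversality1} of $B_{ab}$ (hence of $\tilde{B}_{ab}$) and \eqref{eq:orthoqk} annihilates every term on the right, yielding $R_{ab}k^a k^b = 0$ and thus $\tilde{\sigma}_{ab}\tilde{\sigma}^{ab}=0$ pointwise.

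The last step — and the one that needs genuine care rather than routine bookkeeping — is to pass from the vanishing of the scalar $\tilde{\sigma}_{ab}\tilde{\sigma}^{ab}$ to the vanishing of the tensor $\tilde{\sigma}_{ab}$ itself, since in Lorentzian signature a null ``square'' does not force a null tensor. The resolution is that $\tilde{\sigma}_{ab}$ lives entirely in the two-dimensional transverse space with metric $h_{ab}$, which is \emph{Riemannian} (positive definite); there $\tilde{\sigma}_{ab}\tilde{\sigma}^{ab}$ is an honest sum of squares of components, so it vanishes only if every component does. With the shear eliminated, $\tilde{B}_{ab}=0$ and $T_{ab} = \rho\, k_a k_b$, a null dust, which sits in the type~II family of the Hawking--Ellis classification, completing the proof.
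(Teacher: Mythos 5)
Your proposal is correct and follows essentially the same route as the paper: absorb the $q_a k_b$ cross terms using the parallelism lemma, reduce the Raychaudhuri equation to $\tilde{\sigma}_{ab}\tilde{\sigma}^{ab}=-R_{ab}k^ak^b$, show $R_{ab}k^ak^b=T_{ab}k^ak^b=0$ from the traced vacuum field equation together with the nullity of $k^a$ and the $k$-transversality of $B_{ab}$, and then invoke the positive-definiteness of the transverse $2$-metric to kill the shear componentwise. Your explicit emphasis on why $\tilde{\sigma}_{ab}\tilde{\sigma}^{ab}=0$ forces $\tilde{\sigma}_{ab}=0$ (the Riemannian character of the transverse space) is exactly the point the paper relies on when it states that $\tilde{\sigma}^2$ is positive definite.
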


\subsection{Case 2: $q^a$ is spacelike}
\label{subsec:2}

Let us consider now the second possibility, in which $q^a$ is spacelike. 
Then, $q^a$ lives in 
the 2-space orthogonal to both $k^a$ and $n^a$ and $q_a = h_{ab} q^b$. In 
fact, since $q_c k^c=0$, the vector field $q_a$ can only have a component 
parallel to $n_a$ and components in the 2-space orthogonal to both $k^a$ 
and $n^a$:
\begin{eqnarray}
q_a &=& h_{ab} q^b + \left( q_c n^c \right) n_a \nonumber\\
&=& q_a + (q_b n^b ) k_a +\left( q_c n^c\right) n_a 
\nonumber\\
&=& q_a + \left( q_c n^c \right) \left( k_a + n_a \right) 
\,,\label{astrominchia}
\end{eqnarray}
where we used the definition of $h_{ab}$. In 
order for Eq.~(\ref{astrominchia}) to be satisfied, either $q_c n^c=0$ 
(and then $q^a$ lies in the 2-space orthogonal to both $k^a $ and $n^a$, 
$q_a= 
h_{ab} q^b$, $q^{\mu}= \left( 0, q^1, q^2, 0 \right)$), or else 
$n_a= -k_a $, which is impossible because $n^a$ is chosen to be 
independent 
of $k^a$ and to satisfy $k_c n^c=-1$. Therefore, $q^a$ is orthogonal to 
both $k^b$ and $n^b$.

We  now have
\be
q_c n^c = -\frac{ B_{cd} n^c n^d}{\phi} = -\frac{ n^c n^d \nabla_c 
k_d}{\phi} =   \frac{ n^c k^d \nabla_c n_d}{\phi} = 0  
\ee
and the effective stress-energy tensor~(\ref{SEtensor})  reduces to 
\begin{eqnarray}
T_{ab} & = &  \frac{\omega}{\phi^2} \, k_a k_b 
    + \frac{ \tilde{B}_{ab}}{\phi}
    + q_a k_b + q_b k_a \,,  \label{SEtype3}
\end{eqnarray}
where we used the fact that $ q_c n^c = -n^c n^d B_{cd}/\phi =0 $. This energy-momentum tensor can then be written in the form
\be
    T_{ab} = \rho k_a k_b + \tilde{\pi}_{ab} + q_a k_b + q_b k_a \, ,
\ee
with $\rho = \omega/\phi^2$. Furthermore, using again the Raychauduri 
equation for null geodesic congruences we can eliminate the contribution 
of the shear $\tilde{\pi}_{ab}$, thus reducing the stress-energy tensor 
for $\phi$ to 
\be \label{eq:typeIIIBD}
T_{ab} = \rho k_a k_b + q_a k_b + q_b k_a \, ,
\ee
which has vanishing trace (since $k_a k^a = k_a q^a=0$) and falls into 
type III of the Hawking-Ellis 
classification system \cite{HawkingEllis} (see also \cite{Segre84, 
Plebanski}). This type is the least known of this 
classification and it is largely unknown which physical systems can be 
described by a type III  tensor. The 
only known examples, as previously mentioned, are the gyraton and the 
exotic Lagrangians discussed by 
Podolsk\'y, \v{S}varc \& Maeda
\cite{Podolsky:2018zha}, Martin-Moruno \& Visser 
\cite{Martin-Moruno:2019kzc, Martin-Moruno:2018coa, 
Martin-Moruno:2017iqw, Martin-Moruno:2018eil, 
Martin-Moruno:2021niw}, and Maeda 
\cite{Maeda:2023oxl}. The stress-energy tensor~(\ref{SEtype3}) is further 
reduced to the type III$_0$ of  
\cite{Martin-Moruno:2018coa, Martin-Moruno:2019kzc}  if the Brans-Dicke 
coupling assumes the special value $\omega = 0$.

In other words, we have shown that:

\begin{theorem}
Given {\bf Assumption} $\bm{(\star)}$, if $q^a$ spacelike 
then the effective stress-energy tensor~(\ref{SEtensor}) of the scalar 
field reduces to the stress-energy tensor \eqref{eq:typeIIIBD}, which 
belongs to the type~III family of the Hawking-Ellis classification. 
Furthermore, if $\omega = 0$ the stress-energy tensor \eqref{eq:typeIIIBD} 
further reduces to the type III$_0$ class.
\end{theorem}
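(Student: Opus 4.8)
The plan is to reuse the strategy of Case~1 almost verbatim, exploiting the one structural fact that distinguishes this case: $q^a$ being spacelike forces it to lie entirely in the transverse 2-space. First I would locate $q^a$ geometrically. Since $q_c k^c = 0$ by Eq.~\eqref{eq:orthoqk} and $q^a$ is spacelike (hence not proportional to the null $k^a$), the identity $q_a = h_{ab}q^b + (q_c n^c) n_a$ combined with $h_{ab}q^b = q_a + (q_b n^b)k_a$ yields Eq.~\eqref{astrominchia}, whose only admissible solution is $q_c n^c = 0$; the alternative $n_a = -k_a$ is excluded because $n^a$ is constructed to be independent of $k^a$ with $k_c n^c = -1$. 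So $q^a = h_{ab}q^b$, confined to the 2-space orthogonal to both $k^a$ and $n^a$.

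Next I would substitute $q_c n^c = 0$ back into the general effective stress-energy tensor~\eqref{SEtensor}. By the definition~\eqref{def-q} of $q_a$ one has $n^c n^d B_{cd} = -\phi\, q_c n^c = 0$, so the coefficient of $k_a k_b$ collapses to $\rho = \omega/\phi^2$, and \eqref{SEtensor} reduces to the form~\eqref{SEtype3}, $T_{ab} = \rho\, k_a k_b + \tilde{B}_{ab}/\phi + q_a k_b + q_b k_a$, with the same transverse tensor $\tilde{B}_{ab}$ as before.

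Then comes the shear-elimination step, which is the same Raychaudhuri argument already carried out in Section~\ref{subs:Case1}. Under Assumption~$(\star)$, Proposition~\ref{prop-2} says the congruence tangent to $k^a = \nabla_a\phi$ is an affinely parametrised null geodesic congruence, Lemma~\ref{lemma-1} gives $\Theta = \Box\phi = 0$ everywhere (hence $d\Theta/d\lambda = 0$), and the vorticity vanishes identically because $k_a$ is a gradient. Contracting the field equation~\eqref{fe1} twice with $k^a$, and using $k_c k^c = 0$, the $k$-transversality~\eqref{eq:transversality1} of $\tilde{B}_{ab}$, $V = 0$, and $\Box\phi = 0$, gives $R_{ab}k^a k^b = 0$. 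The Raychaudhuri equation~\eqref{Raychaudhuri} then reduces to $\tilde{\sigma}_{ab}\tilde{\sigma}^{ab} = 0$; since the shear lives in a Riemannian 2-space this quadratic form is positive definite, so $\tilde{\sigma}_{ab} = 0$, and with $\Theta = 0$ and vanishing vorticity this means $\tilde{B}_{ab} = 0$. What survives is exactly $T_{ab} = \rho\, k_a k_b + q_a k_b + q_b k_a$, i.e.\ Eq.~\eqref{eq:typeIIIBD}.

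Finally I would verify the classification. The trace is ${T^a}_a = \rho\, k^a k_a + 2\, q^a k_a = 0$ because $k^a$ is null and $q^a \perp k^a$; together with $q^a$ spacelike and not proportional to $k^a$, this is precisely the Hawking-Ellis type~III canonical form. Setting $\omega = 0$ kills $\rho$, leaving $T_{ab} = q_a k_b + q_b k_a$, the degenerate subcase type~III$_0$. The only part that is not pure bookkeeping is making the Raychaudhuri step airtight: one must ensure $\Theta \equiv 0$ as a function on the congruence (not merely at one instant) so that $d\Theta/d\lambda$ genuinely vanishes, and that $R_{ab}k^a k^b = 0$ follows cleanly from the vacuum, $V=0$ field equations — these two inputs, supplied by Lemma~\ref{lemma-1} and the hypotheses of Assumption~$(\star)$, are what the whole reduction rests on.
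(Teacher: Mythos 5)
Your proposal is correct and follows essentially the same route as the paper: the same decomposition of $q_a$ forcing $q_c n^c = 0$ (hence $n^c n^d B_{cd}=0$ and $\rho=\omega/\phi^2$), the same Raychaudhuri argument with $\Theta=0$, zero vorticity, and $R_{ab}k^ak^b=0$ to kill the shear, and the same trace/orthogonality check for the type~III (and $\omega=0 \Rightarrow$ III$_0$) identification. No gaps.
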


\section{Null scalar field gradient in viable Horndeski gravity}
\label{sec:3}
\setcounter{equation}{0}

Let us now move to Horndeski theories of gravity \cite{Horndeski}, the 
subject of intense research in the past decade ({\em e.g.,} \cite{H1,H2, 
H3, GLPV1, GLPV2, Creminellietal18, Langloisetal18, Langlois:2018dxi, 
Kobayashi:2011nu} and references therein), which are much more general 
than first-generation scalar-tensor theories and contain them as a special 
case. We restrict ourselves to the so-called viable Horndeski 
theories in which the coupling functions are constrained by the 
requirement that gravitational waves propagate at light speed, as shown by 
the GW170817/GRB170817A multi-messenger event detected in gravitational 
waves and in many electromagnetic bands \cite{TheLIGOScientific:2017qsa, 
Monitor:2017mdv}.

The action of viable Horndeski gravity is 
\be
    S = \int d^{4}x \sqrt{-g} \left( \mathcal{L}_2 + \mathcal{L}_3 + 
\mathcal{L}_4 \right) + S^\mathrm{(m)} \, , \label{vhorndeskiaction}
\ee
where $S^\mathrm{(m)}$ is the matter action. The Lagrangian densities 
$\mathcal{L}_i  \, (i = 2,3,4)$ are
\begin{eqnarray}
    \mathcal{L}_2 & = & G_{2}(\phi,X) \, ,\\
    \mathcal{L}_3 & = &-G_{3}(\phi,X)\Box\phi \,,\\
    \mathcal{L}_4 & = &  G_{4}(\phi) R \,,  \label{lagrangedensities}
\end{eqnarray}
where the $G_i  $ are regular functions of the scalar 
field $\phi$ and of  $X := 
-\frac{1}{2} \, \nabla_{c} \phi \nabla^{c} \phi$ (except for $G_4$ that 
depends only on $\phi$ in viable Horndeski theories). The variation of 
the action 
(\ref{vhorndeskiaction}) with respect to the inverse metric $g^{ab}$ 
yields the effective field equations (see {\em e.g.}, 
\cite{Miranda:2022wkz})
\begin{align}
    G&_4 G_{ab} - \nabla_a \nabla_b G_4 + \left(\Box G_4 - \frac{G_2}{2} 
- \frac{1}{2} \nabla_c \phi \nabla^c G_3 \right) g_{ab} \nonumber\\ 
    & + \left( \frac{G_{3X} }{2} \, \Box \phi - \frac{ G_{2X}}{2}  
\right) \nabla_a \phi \nabla_b \phi + \nabla_{(a} \phi \nabla_{b)} G_3 = 
 T^\mathrm{(m)}_{ab} \, ,  \label{gabEOM}
\end{align} 
and variation with respect to $\phi$ gives the equation of motion for the 
scalar field (see {\em e.g.}, \cite{Miranda:2022wkz}):
\begin{align}
    G_{4\phi} & R  + G_{2\phi} + G_{2X} \Box \phi + \nabla_c \phi 
\nabla^c G_{2X} - G_{3X}(\Box \phi)^2 \nonumber \\
    & - (\nabla_c \phi \nabla^c G_{3X}) \Box \phi -    
    G_{3X} \nabla^c \phi \Box \nabla_c \phi   \nonumber\\
    & + G_{3X} R_{ab} \nabla^a \phi \nabla^b \phi - 
\Box G_3 - G_{3\phi} \Box \phi = 0 \,, \label{whichlabeldoIput}
\end{align}
where $T^\mathrm{(m)}_{ab}$ is the matter stress-energy tensor and
\be
    G_{i\phi} := \frac{\partial G_i}{\partial \phi} \, , \quad G_{iX} := \frac{\partial G_i}{\partial X} \quad (i=2,3,4) \, .
\ee
The field equations~(\ref{gabEOM}) can be cast as the effective Einstein equations
\be
    G_{ab} = T_{ab} + \frac{T^\mathrm{(m)}_{ab}}{G_4} \,,
\ee
where $T_{ab} = T^{(2)}_{ab} + T^{(3)}_{ab} + T^{(4)}_{ab}$ is the scalar field effective 
stress-energy tensor capturing all deviations from GR, with  
\begin{eqnarray}
T^{(2)}_{ab}  &=& \frac{1}{2G_4} \left( G_{2X} \nabla_a \phi \nabla_b \phi + G_2 g_{ab} \right) \, , \\
T^{(3)}_{ab} &=& \frac{1}{2G_4} \left( G_{3X} \nabla_c X \nabla^c \phi - 2XG_{3 \phi} \right) g_{ab} \nonumber\\
   & & - \frac{1}{2G_4} \left( 2G_{3 \phi} + G_{3X} \Box \phi \right)\nabla_a \phi \nabla_b \phi - \frac{G_{3X}}{G_4} \nabla_{(a} X \nabla_{b)} \phi \,, \\
T^{(4)}_{ab} &=& \frac{G_{4\phi}}{G_4} \left( \nabla_a \nabla_b \phi - g_{ab} \Box \phi \right) 
 + \frac{G_{4\phi\phi}}{G_4} \left( \nabla_a \phi \nabla_b \phi + 2X g_{ab} \right) \,.
\end{eqnarray}
When the scalar field gradient is null at every spacetime point 
 \cite{Gurses:1978zs, Gurses:2016vwm, Gurses:2021ogc} and 
$\Box\phi=0$, the canonical kinetic term $X$ vanishes with all its 
derivatives, leading to the much simpler total effective energy-momentum 
tensor
\be
T_{ab} = \left(\frac{G_{2X}}{2G_4} - \frac{G_{3\phi}}{G_4} + 
\frac{G_{4\phi\phi}}{G_4}\right) \nabla_a \phi \nabla_b \phi + 
\frac{G_2}{2G_4} g_{ab} + \frac{G_{4\phi}}{G_4} \, \nabla_a \nabla_b \phi 
\,.
\ee
It is shown in Appendix~\ref{sec:AppendixA} that the assumptions 
of this section imply a relation between Horndeski coupling functions
at $X=0, \, \Box\phi=0$. It is also possible to show that the scalar field 
gradient 
$\nabla^a\phi$ is an eigenvector of the Ricci tensor 
(Appendix~\ref{sec:AppendixA}). For lightlike $k^a = \nabla^a \phi$, we  
rewrite this effective $T_{ab}$ as  
\be
T_{ab} = \left(\frac{G_{2X}}{2G_4} - \frac{G_{3\phi}}{G_4} + 
\frac{G_{4\phi\phi}}{G_4}\right) k_a k_b 
+ \frac{G_2}{2G_4} g_{ab} + \frac{G_{4\phi}}{G_4} \nabla_b k_a 
\,.\label{SEintermed}
\ee
Following the logic of the previous section, we define the tensor $B_{ab} 
\equiv  \nabla_b k_a$ and the 2-metric $h_{ab} \equiv g_{ab} + k_a n_b + 
k_b n_a$ orthogonal to both $k^a$ and the  
auxiliary null vector $n^a$, then $\tilde{B}_{ab} = h_a^{\phantom{a}c} 
h_b^{\phantom{b}d} B_{cd}$ is again given explicitly by 
Eq.~(\ref{tildeB}). The effective stress-energy tensor 
obtained from substituting these expressions into (\ref{SEintermed}) 
reads 
\begin{eqnarray}
    T_{ab} & = & \left(\frac{G_{2X}}{2G_4} - \frac{G_{3\phi}}{G_4} + 
     \frac{G_{4\phi\phi}}{G_4} - \frac{ G_{4\phi} n^c n^d B_{cd}}{G_4} 
     \right) k_a k_b \nonumber\\ 
    &&\nonumber\\
    & \, &  + \frac{G_2}{2G_4} \, h_{ab} + \frac{ G_{4\phi}}{G_4} \, 
	\tilde{\sigma}_{ab} \nonumber\\
    &&\nonumber\\
    & \, & + \left(-\frac{G_{4\phi}}{G_4} \, n^c B_{ca} - 
    \frac{G_2}{2G_4} \, n_a \right) k_b \nonumber\\ 
    &&\nonumber\\
    & \, & + \left( -\frac{G_{4\phi}}{G_4} n^d B_{db} - \frac{G_2}{2G_4} 
    \, n_b \right) k_a \, . \label{nullphiTab}
\end{eqnarray}
In this form, it is straightforward to identify the relevant fluid 
quantities:
\begin{eqnarray}
    \rho & := & \frac{  G_{2X}  - 2G_{3\phi} + 2G_{4\phi\phi} - 
2G_{4\phi} n^c n^d B_{cd} }{2G_4}  \, , \label{rho} \\ 
    P & := & \frac{G_2}{2G_4} \,, \\
    \tilde{\pi}_{ab} & := &\frac{G_{4\phi}}{G_4} \,  \tilde{\sigma}_{ab} 
\, ,  \label{Pab} \\
    q_a & := & -\frac{G_{4\phi}}{G_4} \, n^c B_{ca} - \frac{G_2}{2G_4} \, 
n_a \, , \label{qa}
\end{eqnarray}
{\em i.e.}, energy density, isotropic pressure, anisotropic stress 
tensor, and energy current density, respectively. Using the 
identifications~(\ref{rho})--(\ref{qa}) the stress-energy 
tensor~(\ref{nullphiTab}) takes the form
\be \label{H-SEtensor}
    T_{ab} = \rho k_a k_b + P h_{ab} + \tilde{\pi}_{ab} + q_a k_b + q_b k_a \, .
\ee

Note that now there appears the isotropic pressure $P=G_2/(2G_4)$, which 
was absent in first-generation scalar-tensor gravity because, there, $G_2 
\left( \phi, X \right) = \omega(\phi) X/2$ vanishes for $X=0$. 
In viable Horndeski,
\be
    q_a k^a =\frac{G_2}{2G_4}=P \neq 0
\ee
in general and the energy flux density $q^a$ no longer lives in the 
2-space with metric $h_{ab}$ orthogonal to both $k^a$ and $n^a$, but has 
components along the light cone generated by these null vectors since $q_a 
k^a \neq 0$, $ q_a n^a \neq 0$. The trace \be
    {T^a}_a = 4P 
\ee
now does not vanish. What is more, in viable Horndeski theory $q^a$ does 
not have a definite causal character because
\be
    q_a q^a = \frac{ G_{4\phi}}{G_4^{\,2}} \left( G_{4\phi} B_{ca} {B^a}_d 
n^c n^d - G_2 n^a n^c B_{ac} \right) 
\ee
has indefinite sign. 

Following a similar procedure as in Sec.~\ref{subs:Case1} it is easy to 
see that:
\begin{prop}
Consider the {\em vacuum} viable Horndeski gravity with a null scalar 
field gradient $k_a := \nabla_a \phi$ and $\Box \phi = 0$. Then the 
effective stress-energy tensor \eqref{H-SEtensor} for the Horndeski 
scalar field $\phi$ is shearless. \end{prop}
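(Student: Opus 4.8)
The plan is to reproduce, essentially verbatim, the shear-elimination argument of Sec.~\ref{subs:Case1}, i.e.\ to feed the kinematics of the null geodesic congruence tangent to $k^a=\nabla^a\phi$ into the Raychaudhuri equation~\eqref{Raychaudhuri}. First I would record that $k^a$ is geodesic and affinely parametrised: since $k_a=\nabla_a\phi$ is a gradient, $k^a\nabla_ak_b=k^a\nabla_a\nabla_b\phi=k^a\nabla_b\nabla_a\phi=\frac{1}{2}\nabla_b(k^ak_a)=0$ because $k^ak_a=0$. This is precisely {\bf Prop.~\ref{prop-2}} together with its Remark, and since the derivation uses nothing but the fact that $k_a$ is a null gradient it transfers unchanged to the Horndeski setting. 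The standing hypothesis $\Box\phi=0$ then gives $\Theta=\nabla_ck^c=\Box\phi=0$ and hence $d\Theta/d\lambda=0$ along the congruence, while the vorticity $\tilde\omega_{ab}=\tilde B_{[ab]}$ vanishes identically because $B_{ab}=\nabla_b\nabla_a\phi$ is symmetric.

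With these ingredients the Raychaudhuri equation~\eqref{Raychaudhuri} collapses to $\tilde\sigma_{ab}\tilde\sigma^{ab}=-R_{ab}k^ak^b$, so it only remains to prove $R_{ab}k^ak^b=0$. For this I would use that {\em in vacuo} the effective field equations are $G_{ab}=T_{ab}$ with $T_{ab}$ given by~\eqref{H-SEtensor}; contracting twice with $k^a$ and noting that $g_{ab}k^ak^b=k_ak^a=0$ removes the $g_{ab}R$ term, so $R_{ab}k^ak^b=T_{ab}k^ak^b$. Each of the four pieces of~\eqref{H-SEtensor} then vanishes under the double contraction: $\rho\,k_ak_b$ and the cross terms $q_ak_b+q_bk_a$ because $k^a$ is null, $P\,h_{ab}$ because $h_{ab}k^b=0$ by construction~\eqref{hab}, and $\tilde\pi_{ab}=(G_{4\phi}/G_4)\,\tilde\sigma_{ab}$ because $\tilde\sigma_{ab}$ is transverse to $k^a$, living in the screen $2$-space with metric $h_{ab}$. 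It is worth stressing that the genuinely new features of the Horndeski case---the isotropic pressure $P=G_2/(2G_4)\neq0$ and the relations $q_ak^a=P\neq0$, $q_an^a\neq0$---are harmless here, precisely because the nullity of $k^a$ kills their contributions to $R_{ab}k^ak^b$.

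Finally, $R_{ab}k^ak^b=0$ forces $\tilde\sigma_{ab}\tilde\sigma^{ab}=0$ through~\eqref{Raychaudhuri}. Since $\tilde\sigma_{ab}$ is a symmetric tensor supported on the $2$-dimensional Riemannian screen space orthogonal to both $k^a$ and $n^a$, the scalar $2\tilde\sigma^2:=\tilde\sigma_{ab}\tilde\sigma^{ab}$ is non-negative and vanishes only if $\tilde\sigma_{ab}\equiv0$ \cite{Ellis:1971pg}; hence $\tilde\pi_{ab}=0$ and the effective stress-energy tensor~\eqref{H-SEtensor} is shearless. I expect the only step that needs a word of care is checking that the transversality $\tilde\sigma_{ab}k^b=0$ and the vanishing of $\tilde\omega_{ab}$ persist once the Horndeski functions $G_2$, $G_3$, $G_4$ are turned on; but both follow purely from the kinematic definitions $\tilde B_{ab}=h_a{}^ch_b{}^dB_{cd}$ and $B_{ab}=\nabla_b\nabla_a\phi$ and are completely insensitive to the dynamics, so no real obstacle is present---the argument is genuinely a copy of the first-generation case with $P$ and $q_a$ along for the ride.
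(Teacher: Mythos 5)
Your proposal is correct and follows essentially the same route as the paper's own proof: reduce the Raychaudhuri equation to $2\tilde{\sigma}^2=-R_{ab}k^ak^b$ using $\Theta=\Box\phi=0$ and $\tilde{\omega}_{ab}=0$, obtain $R_{ab}k^ak^b=T_{ab}k^ak^b$ from the vacuum field equations, and kill every term of the double contraction via the nullity of $k^a$, $h_{ab}k^b=0$, and the $k$-transversality of $B_{ab}$. The only cosmetic difference is that you contract the decomposed form~\eqref{H-SEtensor} while the paper contracts the intermediate form~\eqref{SEintermed}; both rest on the same facts and your explicit check that $q_ak^a=P\neq 0$ is harmless is a worthwhile clarification.
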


\begin{proof}
The Raychaudhuri equation for null geodesic congruences \eqref{Raychaudhuri} reduces to
\be
2 \tilde{\sigma}^2 = \tilde{\sigma}_{ab}\tilde{\sigma}^{ab} = 
-R_{ab} k^a k^b
\ee
Contracting twice the Horndeski field equation \eqref{gabEOM} for {\em  
vacuum} viable Horndeski gravity with $k^a$, and with the assumption $\Box \phi = 
0$, one obtains
$$
R_{ab} k^a k^b = T_{ab} k^a k^b \,.
$$
Furthermore, 
\be
T_{ab} k^a k^b =
 \frac{1}{G_4} \left[ \left( -G_{3\phi}  +G_{4\phi\phi} 
+\frac{G_{2X}}{2} \right) k_a k_b 
 + \,G_{4\phi} B_{ab} + \frac{G_2}{2G_4} \, g_{ab} \right] k^a k^b =  0 \, ,
\ee
hence we can conclude that $ 2 \tilde{\sigma}^2 := 
\tilde{\sigma}_{ab}\tilde{\sigma}^{ab} = -R_{ab} k^a k^b = - T_{ab} k^a 
k^b = 0$. Since $\tilde{\sigma}^2$ is positive-definite, all the 
components of $\tilde{\sigma}_{ab}$ vanish identically, which completes 
the proof. \end{proof}

The stress-energy tensor of {\em vacuum} viable Horndeski gravity with a 
null scalar field gradient $k_a := \nabla_a \phi$ and $\Box 
\phi = 0$ at every spacetime point then reads
\be
T_{ab} = \rho k_a k_b + P h_{ab} +q_a k_b + q_b k_a \,,
\ee
which is not of type~III because $q^a$ is not necessarily spacelike nor 
orthogonal to the null vector $k^a$. However, if $G_2 \left( \phi, X=0 
\right)=0$ one has that the above stress-energy tensor reduces to
\be \label{H-typeIII}
T_{ab} = \rho k_a k_b + q_a k_b + q_b k_a \,, 
\ee
which instead belongs to the type~III family of the Hawking-Ellis 
classification.

In other words, we have shown that:
\begin{theorem}
Given the {\em vacuum} viable Horndeski gravity with a null 
scalar 
field gradient $k_a := \nabla_a \phi$, $\Box \phi = 0$, and 
$G_2 \left( \phi, X=0 \right)=0$, the  stress-energy tensor 
\eqref{H-SEtensor} for the Horndeski scalar field reduces to 
\eqref{H-typeIII}, which belongs to the type~III family of the 
Hawking-Ellis classification.
\end{theorem}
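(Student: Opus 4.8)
The plan is to obtain the reduction directly from the general decomposition \eqref{H-SEtensor}, $T_{ab} = \rho k_a k_b + P h_{ab} + \tilde{\pi}_{ab} + q_a k_b + q_b k_a$, by switching off the two terms that spoil the type~III form. First I would invoke the preceding proposition: under \emph{vacuum} viable Horndeski gravity with $k_a = \nabla_a \phi$ null and $\Box\phi = 0$, the congruence generated by $k^a$ is shearless, $\tilde{\sigma}_{ab} = 0$. By the definition \eqref{Pab} this forces the anisotropic stress $\tilde{\pi}_{ab} = (G_{4\phi}/G_4)\,\tilde{\sigma}_{ab}$ to vanish identically, removing the $\tilde{\pi}_{ab}$ term.

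Next I would use the hypothesis $G_2(\phi, X=0) = 0$. Since $X = -\tfrac12 \nabla_c \phi \nabla^c \phi = -\tfrac12 k_c k^c = 0$ at every point of the spacetime under consideration, $G_2 = 0$ there, so the isotropic pressure $P = G_2/(2G_4)$ in \eqref{Pab} vanishes and the $P h_{ab}$ term disappears as well; in the energy current \eqref{qa} the piece $-(G_2/2G_4)\,n_a$ also drops, leaving $q_a = -(G_{4\phi}/G_4)\, n^c B_{ca}$. From the transversality \eqref{eq:transversality1} of $B_{ab} = \nabla_b k_a$ to $k^a$ one then gets $q_a k^a = 0$. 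After these two simplifications \eqref{H-SEtensor} collapses to $T_{ab} = \rho k_a k_b + q_a k_b + q_b k_a$, i.e.\ \eqref{H-typeIII}, with vanishing trace since $k_a k^a = k_a q^a = 0$.

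Finally I would settle the causal character of the current. Having $q_a k^a = 0$ with $k^a$ null, Proposition~\ref{prop-3} leaves only $q^a \parallel k^a$ or $q^a$ spacelike. Decomposing $q_a = \tilde{q}_a - (n^c q_c)\,k_a$, where $\tilde{q}_a = h_{ab}q^b$ lies in the 2-space orthogonal to both $k^a$ and $n^a$ (using $q_a k^a = 0$), the $k$-component can be absorbed into a redefinition $\rho \to \rho - 2\,n^c q_c$ in \eqref{H-typeIII}; the residual current $\tilde{q}_a$ lives in the Riemannian transverse 2-plane and is therefore spacelike whenever it is nonzero — the generic situation, which additionally requires $G_{4\phi} \neq 0$ and $n^c B_{ca}$ to have a nonvanishing projection onto that 2-plane. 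In that case \eqref{H-typeIII} is exactly the Hawking-Ellis type~III canonical form, while the degenerate subcase $\tilde{q}_a = 0$ returns the type~II null dust, consistently with regarding type~III as the broader family containing type~II as a limit.

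The main obstacle is precisely this last point: everything before it follows mechanically from the shearlessness proposition, the definitions \eqref{rho}--\eqref{qa}, and the transversality of $B_{ab}$, but the statement ``belongs to the type~III family'' needs the genericity conditions to be made explicit, and one must notice that $q^a$ is not by itself spacelike until its $k^a$-component is reabsorbed into $\rho$ — in contrast to the first-generation case of Sec.~\ref{subsec:2}, where $q_c n^c = 0$ held automatically.
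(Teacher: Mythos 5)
Your proposal is correct and follows essentially the same route as the paper: invoke the shearlessness proposition to kill $\tilde{\pi}_{ab}$, then use $G_2(\phi,X=0)=0$ (with $X\equiv 0$ on the solution) to kill $P h_{ab}$ and the $-(G_2/2G_4)n_a$ piece of $q_a$, leaving \eqref{H-typeIII}. Your final step --- checking $q_a k^a=0$ via the $k$-transversality of $B_{ab}$, splitting $q_a=\tilde{q}_a-(n^c q_c)k_a$, absorbing the $k$-component into $\rho$, and flagging that genuine type~III requires the residual transverse (hence spacelike) part $\tilde{q}_a$ to be nonzero --- is a worthwhile explicit verification of the canonical type~III form that the paper leaves implicit.
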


The stress-energy tensors (\ref{eq:typeIIIBD}) and 
(\ref{H-typeIII}) are not the most general Type~III stress-energy tensors 
because the null vector field $k^a$ originates from a gradient and is 
divergence-free, which are not general properties.

\section{Discussion and conclusions}
\label{sec:4}
\setcounter{equation}{0}

Assuming that the gravitational scalar field $\phi$ of first-generation 
scalar-tensor or Horndeski gravity is null everywhere, it is also geodesic 
and affinely parameterized, and the congruence of null geodesics with 
tangent $k_a=\nabla_a \phi$ is shear-free, non-twisting, and 
non-expanding. We have classified its effective stress-energy tensor, 
obtained by writing the field equations as effective Einstein equations.

In first-generation scalar-tensor gravity this effective $T_{ab}$ can be 
of only two types.  In the first case, it reduces to the well known null 
dust, or type~II in the Segr\'e-Pleba\'nski-Hawking-Ellis classification. 
In the second case, it contains an energy flux density $q^a$ which is 
spacelike, and we have a physical realization of Type~III stress-energy 
tensor (which can even become the simplified Type~III$_0$ 
of 
\cite{Martin-Moruno:2018coa, Martin-Moruno:2019kzc} in $\omega=0$ 
Brans-Dicke theory). Therefore, we have  an implementation of the 
ugly 
duckling type~III stress-energy tensor. This avatar of the ugly duckling 
is derived directly from the scalar-tensor Lagrangian and not from exotic 
Lagrangians constructed {\em ad hoc} (which seems to be the only avenue found as yet 
in Einstein gravity).  Explicit examples are not easy to find and are likely to 
be contrived. The most likely candidates for type~III effective 
stress-energy tensors in scalar-tensor gravity are Kundt spacetimes, 
however the known exact solutions of this kind in ``old'' scalar-tensor 
\cite{Tupper74, Ray:1977nr, Bressange:1997ey, Racsko:2018xcw, 
Siddhant:2020gkn} and in Horndeski \cite{Gurses:1978zs, Gurses:2016vwm, 
Gurses:2021ogc} theories have stress-energy tensors describing pure null 
dusts ({\em i.e.}, of type~II).

The situation in viable Horndeski gravity is more complicated, as an 
isotropic pressure appears unless $G_2 \left( \phi, X=0 \right)=0$ (in 
which case the discussion for first-generation scalar-tensor gravity 
applies again) and the energy flux density $q^a$ is  
neither orthogonal to $k^a$ nor spacelike.

We suggest a possible physical interpretation of the type~III 
energy-momentum tensor, with the obvious {\em caveat} that the null vector 
$n^a$ and, therefore, the 2-metric $h_{ab}$, density $\rho$, and vector 
$q^a$ are non-unique. The null dust part $\rho k_a k_b$ of the effective 
$T_{ab}$, with $k^a$ null and geodesic, describes coherent propagation of 
radiation, which is accompanied by a spacelike (therefore, non-causal) 
dissipation of energy in the direction transverse to $k^a$ and $n^a$, as 
in heat conduction. The interpretation of $q^a$ is essentially the same 
provided for the dissipative stress-energy tensor $T_{ab}= \rho u_a u_b +P 
h_{ab} +\pi_{ab}+q_a u_b$ when the four-velocity $u^a$ of a dissipative 
fluid is timelike and $q^a$ is spacelike \cite{Eckart:1940te}. It seems 
counterintuitive that the propagation of a beam at light speed would be 
compatible with the removal of energy from the beam, but energy is 
ill-defined for non-asymptotically flat geometries and the simpler 
$pp$-waves of general relativity exhibit energy-related features that are 
difficult to interpret \cite{Hayward:1993ph}, hence this objection may not 
be substantial after all.

The effective stress-energy tensor of the Brans-Dicke-like scalar 
field (in the so-called Jordan frame used in the present work) changes 
type  under a conformal transformation, which preserves 
 the lightlike nature of $\nabla^a \phi$.  A Type~III energy-momentum 
tensor in the Jordan frame of first-generation scalar-tensor gravity will 
become Type~II in the conformally transformed version, the Einstein frame, 
a property that makes it possible to generate and study Jordan frame 
geometries sourced by type~III stress-energy tensors using conformal 
mapping and known Einstein frame solutions associated with Type~II 
$T_{ab}$'s. The reader might have noticed the lack of exact solutions 
providing explicit examples of Type~III $T_{ab}$ in the previous section 
(which, of course, requires one to fix the coupling functions $G_i$). In a 
future publication we will search for such solutions using conformal 
mappings.
Incidentally, when the gradient $\nabla_a \phi$ of the gravitational 
scalar field of scalar-tensor (including Horndeski) gravity is lightlike, 
the effective ``fluid'' described by its stress-energy tensor does not \newpage
\noindent lead to a concept of effective ``temperature of gravity'' \footnote{This 
is because $q^a$ does not satisfy a known constitutive relation relating 
it to a temperature. For timelike $\nabla_a \phi = u_a$, $q_a$ satisfies 
the Eckart constitutive relation $q_a = -\mathcal{K} h_{ab} \left(\nabla^b 
\mathcal{T} + \mathcal{T} \dot{u}^b\right)$, which allows one to define an 
effective temperature of gravity.}, as it does when $\nabla_a \phi$ is 
timelike. Although not unexpected, this conclusion dashes the hope of 
extending the first-order thermodynamics of scalar-tensor and Horndeski 
gravity developed in \cite{Faraoni:2018qdr, Faraoni:2021lfc,
Faraoni:2021jri, Giusti:2021sku, Giardino:2022sdv, Giusti:2022tgq, 
Faraoni:2022gry, Miranda:2022wkz, Miranda:2022uyk, Faraoni:2022doe, 
Faraoni:2023hwu, Faraoni:2022jyd, Faraoni:2022fxo} to the null case.

We conclude that the ugly duckling of the 
Segr\'e-Pleba\'nski-Hawking-Ellis classification of stress-energy tensors 
may be a freak of nature of limited importance, but it is not physically 
impossible.
\section*{Acknowledgments} 
We are grateful to Marcello Miranda and Serena Giardino for useful 
discussions and to a referee for very helpful comments. N.~B. is 
grateful to Bishop's University for hospitality under the Tomlinson 
Visiting Scholars programme. This work is supported, in part, by the 
Natural Sciences \& Engineering Research Council of Canada through grant 
No.~2023-03234 (V.~F.) and an Undergraduate Student Research Award 
(R.~V.). The work of A.~G. has been carried out in the framework of the 
activities of the Italian National Group of Mathematical Physics [Gruppo 
Nazionale per la Fisica Matematica (GNFM), Istituto Nazionale di Alta 
Matematica (INdAM)].

\begin{appendices}

\section{$\nabla^a \phi$ as an eigenvector of the Ricci tensor and 
relation between Horndeski coupling functions}
\label{sec:AppendixA}
\renewcommand{\theequation}{A.\arabic{equation}}
\setcounter{equation}{0}

In viable Horndeski gravity, assuming $X\equiv 0$, $\Box \phi=0$, and 
$G_4=G_4(\phi)$, the gradient and the d'Alembertian of $X$ also vanish 
identically and Eq.~(\ref{whichlabeldoIput}) reduces to 
\be
G_{4\phi} R +G_{2\phi} -G_{3X} \nabla^c \phi \Box \nabla_c \phi 
+G_{3X}R_{ab} \nabla^a \phi \nabla^b \phi =0 \label{A1}
\ee
at $X=0$, $\Box\phi=0$. By applying the commutation relation
\be
\left( \nabla_a \nabla_b - \nabla_b \nabla_c \right) \omega_c 
={R_{abc}}^d  \omega_d
\ee 
to $\omega_a = \nabla_a \phi$ and using $\Box\phi=0$, one obtains
\begin{eqnarray}
\nabla^c \phi \Box \nabla_c \phi &=& 
\nabla^c \phi \nabla^a \nabla_a \nabla_c \phi = 
\nabla^c \phi \nabla^a \nabla_c \nabla_a \phi   \nonumber\\
&=& \nabla^c\phi \left( \nabla_c \Box \phi  +{R^a}_{cad} \nabla^d \phi 
\right) \nonumber\\
&=& \nabla^c\phi \left( \nabla_c \Box \phi + R_{cd}  \nabla^d \phi 
\right) \nonumber\\
&=&  R_{ab} \nabla^a \phi \nabla^b \phi  \label{hideki}
\end{eqnarray}
using the definition of the Ricci tensor $ R_{dc} \equiv {R_{dac}}^a$ and 
the symmetries of the Riemann tensor.

{\em In vacuo}, Eq.~(\ref{whichlabeldoIput}) 
reduces to 
\be
G_4 \left( R_{ab}-\frac{1}{2} \, g_{ab} R \right) -\nabla_a \nabla_b G_4 + 
\left( \Box G_4 -\frac{G_2}{2} -\frac{1}{2} \, \nabla_c\phi \nabla^c G_3 
\right) g_{ab} 
- \frac{G_{2X}}{2} \,\nabla_a \phi \nabla_b \phi +\nabla_{(a}\phi 
\nabla_{b)} G_3=0 \,,\label{A3}
\ee
taking the trace of which (and using $\Box G_4= G_{4\phi\phi} 
\nabla^a\phi\nabla_a\phi +G_{4\phi} \Box\phi=0$) produces  
\be
R=- \frac{2 G_2\left( \phi, X=0 \right)}{G_4(\phi)} \,.\label{A4}
\ee
Now, the contraction of Eq.~(\ref{A3}) with $\nabla^a\phi \nabla^b \phi$ 
yields
\be
R_{ab} \nabla^a\phi\nabla^b \phi =\frac{ G_{4\phi}}{G_4} \, 
\nabla^a\phi \nabla^b \phi \nabla_a\nabla_b \phi \,, \label{A5}
\ee
but
\be
\nabla^a\phi \nabla^b \phi \nabla_a\nabla_b \phi =k^a k^b B_{ab}=0 
\ee
due to the $k$-transversality of $B_{ab}$, hence
\be
R_{ab} \nabla^a\phi \nabla^b \phi=0 \,. \label{A6}
\ee
Using this result, it follows from Eq.~(\ref{hideki}) that the 
combination
\be
G_{3X} \left( - \nabla^c\phi   \Box \nabla_c \phi  +R_{ab} \nabla^a \phi 
\nabla^b \phi \right) 
\ee
appearing in Eq.~(\ref{A1}) vanishes when $\Box\phi=0$ and $X=0$, and 
Eq.~(\ref{A1}) then gives the relation 
between coupling functions
\be
G_{4\phi} R + G_{2\phi}=0 \label{mammamia}
\ee
at $\Box\phi=0, X=0$.
Now the comparison of Eqs.~(\ref{mammamia}) and (\ref{A4}) yields 
\be
\frac{ G_{2\phi}}{G_2} -\frac{2G_{4\phi}}{G_4}=0 
\ee
at $X=0, \, \Box\phi=0$. Furthermore, substituting Eq.~(\ref{A4}) into Eq.~(\ref{A3}) 
yields
\be
G_4 R_{ab} +\frac{G_2}{2} g_{ab} + \left( G_{3\phi} -G_{4\phi\phi} 
-\frac{G_{2X}}{2} \right) \nabla_a\phi \nabla_b \phi -G_{4\phi} 
\nabla_a\nabla_b \phi =0 
\ee
which, contracted with the gradient $\nabla^a\phi$, then gives
\be
R_{ab} \nabla^a\phi =-\frac{G_2}{2G_4} \, \nabla_b \phi = \frac{R}{4} \, 
\nabla_b \phi \,:
\ee
the scalar field gradient $\nabla^a\phi$ is an eigenvector of the Ricci  
tensor with eigenvalue $R/4$.
\end{appendices}


\begin{thebibliography}{99}

\bibitem{Waldbook} R.~M. Wald, {\em General Relativity} (Chicago 
University Press, Chicago, 1984).

\bibitem{Segre84} C. Segr\'e, ``Sulla teoria e sulla classificazione delle 
omografie in uno spazio lineare ad uno numero qualunque di dimensioni,'' 
Memorie della Reale Accademia dei Lincei 3a (1884) 127.

\bibitem{Plebanski} J. Pleba\`nski, ``The algebraic structure of the 
tensor of matter,'' Acta Physica Polonica 26 (1964) 963.
 
\bibitem{HawkingEllis} S. W. Hawking and G. F. R. Ellis, {\em The Large 
Scale Structure of Space-Time} (Cambridge University Press, Cambridge, 
1973).



\bibitem{Martin-Moruno:2018coa}
P.~Martin-Moruno and M.~Visser,
``Hawking\textendash{}Ellis type III spacetime geometry,''
Class. Quant. Grav. \textbf{35}, no.18, 185004 (2018)
doi:10.1088/1361-6382/aad473
[arXiv:1806.02094 [gr-qc]].

\bibitem{Martin-Moruno:2019kzc}
P.~Mart\'\i{}n-Moruno and M.~Visser,
``The type III stress-energy tensor: ugly duckling of the Hawking\textendash{}Ellis classification,''
Class. Quant. Grav. \textbf{37}, no.1, 015013 (2020)
doi:10.1088/1361-6382/ab56f6
[arXiv:1907.01269 [gr-qc]].

\bibitem{Maeda:2023oxl}
H.~Maeda,
``Energy conditions for non-timelike thin shells,''
Class. Quant. Grav. \textbf{40}, no.19, 195009 (2023)
doi:10.1088/1361-6382/acf181
[arXiv:2306.07326 [gr-qc]].

\bibitem{Podolsky:2018zha} J.~Podolsk\'y, R.~\v{S}varc and H.~Maeda, ``All 
solutions of Einstein\textquoteright{}s equations in 2+1 dimensions: 
$\Lambda$-vacuum, pure radiation, or gyratons,'' Class. Quant. Grav. 
\textbf{36}, no.1, 015009 (2019) doi:10.1088/1361-6382/aaef51 
[arXiv:1809.02480 [gr-qc]].

\bibitem{Bergmann:1968ve} P.~G.~Bergmann, ``Comments on the scalar tensor 
theory'', \emph{Int. J. Theor. Phys.} \textbf{1}, 25-36 (1968) 
doi:10.1007/BF00668828.

\bibitem{Brans:1961sx} C.~Brans and R.~H.~Dicke, ``Mach's principle and a 
relativistic theory of gravitation'', \emph{Phys. Rev.} \textbf{124}, 
925-935 (1961) doi:10.1103/PhysRev.124.925.

\bibitem{Nordtvedt:1968qs} K.~Nordtvedt, ``Equivalence Principle for 
Massive Bodies. 2. Theory'', \emph{Phys. Rev. \textbf{169}}, 1017-1025 
(1968) doi:10.1103/PhysRev.169.1017.

\bibitem{Wagoner:1970vr} R.~V.~Wagoner, ``Scalar tensor theory and 
gravitational waves'', \emph{Phys. Rev. D} \textbf{1}, 3209-3216 (1970) 
doi:10.1103/PhysRevD.1.3209.

\bibitem{Horndeski} G. W. Horndeski, ``Second-order scalar-tensor field 
equations in a four-dimensional space'', \emph{Int. J. Theor. Phys.} 10, 
363 (1974), doi:10.1007/BF01807638.

\bibitem{DHOST1} D. Langlois and K. Noui, ``Degenerate higher derivative 
theories beyond Horndeski: evading the Ostrogradski instability'', 
\emph{JCAP} 1602, no. 02, 034 (2016) arXiv:1510.06930.

\bibitem{DHOST2} D. Langlois and K. Noui, ``Hamiltonian analysis of higher 
derivative scalar-tensor theories'', \emph{JCAP} 1607, no. 07, 016 (2016) 
arXiv:1512.06820.

\bibitem{DHOST3} J. Ben Achour, D. Langlois and K. Noui, ``Degenerate 
higher order scalar-tensor theories beyond Horndeski and disformal 
transformations'', \emph{Phys. Rev. D} 93, no. 12, 124005 (2016) 
arXiv:1602.08398.

\bibitem{DHOST4} M. Crisostomi, K. Koyama and G. Tasinato, ``Extended 
Scalar-Tensor Theories of Gravity'', \emph{JCAP} 1604, no. 04, 044 (2016) 
arXiv:1602.03119.

\bibitem{DHOST5} H. Motohashi, K. Noui, T. Suyama, M. Yamaguchi and D. 
Langlois, ``Healthy degenerate theories with higher derivatives'', 
\emph{JCAP} 1607, no. 07, 033 (2016) arXiv:1603.09355.

\bibitem{DHOST6} J. Ben Achour, M. Crisostomi, K. Koyama, D. Langlois, K. 
Noui and G. Tasinato, ``Degenerate higher order scalar-tensor theories 
beyond Horndeski up to cubic order'', \emph{JHEP} 1612, 100 (2016) 
arXiv:1608.08135.

\bibitem{DHOST7} M. Crisostomi, R. Klein and D. Roest, ``Higher Derivative 
Field Theories: Degeneracy Conditions and Classes'', \emph{JHEP} 1706, 124 
(2017) arXiv:1703.01623.

\bibitem{DHOSTreview1} D. Langlois, ``Dark energy and modified gravity in 
degenerate higher-order scalar-tensor (DHOST) theories: A review'', 
\emph{Int. J. Mod. Phys. D} 28, no. 05, 1942006 (2019) arXiv:1811.06271.

\bibitem{DHOSTreview2} D. Langlois, ``Degenerate Higher-Order 
Scalar-Tensor (DHOST) theories'', arXiv:1707.03625.

\bibitem{H1} C. Deffayet, G. Esposito-Far\`ese and A. Vikman, ``Covariant 
Galileon'', \emph{Phys. Rev.} D 79, 084003 (2009) arXiv:0901.1314.

\bibitem{H2} C. Deffayet, S. Deser and G. Esposito-Far\'ese, ``Generalized 
Galileons: All scalar models whose curved background extensions maintain 
second-order field equations and stress-tensors'', \emph{Phys. Rev. D} 80, 
064015 (2009), arXiv:0906.1967.

\bibitem{H3} C. Deffayet, X. Gao, D. A. Steer and G. Zahariade, ``From 
k-essence to generalised Galileons'', \emph{Phys. Rev.} D 84, 064039 
(2011), arXiv:1103.3260.

\bibitem{GLPV1} J. Gleyzes, D. Langlois, F. Piazza and F. Vernizzi, 
``Healthy theories beyond Horndeski'', \emph{Phys. Rev. Lett.} 114, no. 
21, 211101 (2015) arXiv:1404.6495.

\bibitem{GLPV2} J. Gleyzes, D. Langlois, F. Piazza and F. Vernizzi, 
``Exploring gravitational theories beyond Horndeski'', \emph{JCAP} 1502, 
018 (2015) arXiv:1408.1952.

\bibitem{Creminellietal18} P. Creminelli, M. Lewandowski, G. Tambalo and 
F. Vernizzi, ``Gravitational Wave Decay into Dark Energy'', \emph{JCAP} 
1812, no. 12, 025 (2018) arXiv:1809.03484.

\bibitem{Langloisetal18}D. Langlois, R. Saito, D. Yamauchi and K. Noui, 
``Scalar-tensor theories and modified gravity in the wake of GW170817'', 
\emph{Phys. Rev. D 97}, no. 6, 061501 (2018) arXiv:1711.07403.

\bibitem{Langlois:2018dxi} D.~Langlois, ``Dark energy and modified gravity 
in degenerate higher-order scalar\textendash{}tensor (DHOST) theories: A 
review'', \emph{Int. J. Mod. Phys. D} \textbf{28}, no.05, 1942006 (2019) 
doi:10.1142/S0218271819420069 [arXiv:1811.06271 [gr-qc]].

\bibitem{Kobayashi:2011nu} T.~Kobayashi, M.~Yamaguchi and J.~Yokoyama, 
``Generalized G-inflation: Inflation with the most general second-order 
field equations,'' Prog. Theor. Phys. \textbf{126} (2011), 511-529, 
doi:10.1143/PTP.126.511 [arXiv:1105.5723 [hep-th]].

\bibitem{Faraoni:2018qdr} V.~Faraoni and J.~C\^ot\'e, ``Imperfect fluid 
description of modified gravities,'' Phys. Rev. D \textbf{98} (2018) 
no.~8, 084019 doi:10.1103/PhysRevD.98.084019 [arXiv:1808.02427 [gr-qc]].

\bibitem{Faraoni:2021lfc} V.~Faraoni and A.~Giusti, ``Thermodynamics of 
scalar-tensor gravity,'' Phys. Rev. D \textbf{103}, no.12, L121501 (2021) 
doi:10.1103/PhysRevD.103.L121501 [arXiv:2103.05389 [gr-qc]].

\bibitem{Faraoni:2021jri} V.~Faraoni, A.~Giusti and A.~Mentrelli, ``New 
approach to the thermodynamics of scalar-tensor gravity,'' Phys. Rev. D 
\textbf{104}, no.12, 124031 (2021) doi:10.1103/PhysRevD.104.124031 
[arXiv:2110.02368 [gr-qc]].

\bibitem{Giusti:2021sku} A.~Giusti, S.~Zentarra, L.~Heisenberg and 
V.~Faraoni, ``First-order thermodynamics of Horndeski gravity,'' Phys. 
Rev. D \textbf{105}, no.12, 124011 (2022) doi:10.1103/PhysRevD.105.124011 
[arXiv:2108.10706 [gr-qc]].

\bibitem{Giardino:2022sdv} S.~Giardino, V.~Faraoni and A.~Giusti, 
``First-order thermodynamics of scalar-tensor cosmology,'' JCAP 
\textbf{04}, no.04, 053 (2022) doi:10.1088/1475-7516/2022/04/053 
[arXiv:2202.07393 [gr-qc]].

\bibitem{Giusti:2022tgq} A.~Giusti, S.~Giardino and V.~Faraoni, 
``Past-directed scalar field gradients and scalar-tensor thermodynamics,'' 
Gen. Rel. Grav. \textbf{55} (2023) no.3, 47 doi:10.1007/s10714-023-03095-7 
[arXiv:2210.15348 [gr-qc]].

\bibitem{Faraoni:2022gry} V.~Faraoni, S.~Giardino, A.~Giusti and 
R.~Vanderwee, ``Scalar field as a perfect fluid: thermodynamics of 
minimally coupled scalars and Einstein frame scalar-tensor gravity,'' Eur. 
Phys. J. C \textbf{83}, no.1, 24 (2023) 
doi:10.1140/epjc/s10052-023-11186-7 [arXiv:2208.04051 [gr-qc]].

\bibitem{Miranda:2022wkz}
M.~Miranda, D.~Vernieri, S.~Capozziello and V.~Faraoni,
``Fluid nature constrains Horndeski gravity,''
Gen. Rel. Grav. \textbf{55} (2023) no.7, 84
doi:10.1007/s10714-023-03128-1
[arXiv:2209.02727 [gr-qc]].

\bibitem{Miranda:2022uyk}
M.~Miranda, P.~A.~Graham and V.~Faraoni,
``Effective fluid mixture of tensor-multi-scalar gravity,''
Eur. Phys. J. Plus \textbf{138}, no.5, 387 (2023)
doi:10.1140/epjp/s13360-023-03984-5
[arXiv:2211.03958 [gr-qc]].

\bibitem{Faraoni:2022doe}
V.~Faraoni, A.~Giusti, S.~Jose and S.~Giardino,
``Peculiar thermal states in the first-order thermodynamics of gravity,''
Phys. Rev. D \textbf{106}, no.2, 024049 (2022)
doi:10.1103/PhysRevD.106.024049
[arXiv:2206.02046 [gr-qc]].

\bibitem{Faraoni:2023hwu} V.~Faraoni and J.~Houle, ``More on the 
first-order thermodynamics of scalar-tensor and Horndeski gravity,'' Eur. 
Phys. J. C \textbf{83}, no.6, 521 (2023) 
doi:10.1140/epjc/s10052-023-11712-7 [arXiv:2302.01442 [gr-qc]].

\bibitem{Faraoni:2022jyd}
V.~Faraoni and T.~B.~Fran\c{c}onnet,
``Stealth metastable state of scalar-tensor thermodynamics,''
Phys. Rev. D \textbf{105}, no.10, 104006 (2022)
doi:10.1103/PhysRevD.105.104006
[arXiv:2203.14934 [gr-qc]].

\bibitem{Faraoni:2022fxo} V.~Faraoni, P.~A.~Graham and A.~Leblanc, 
``Critical solutions of nonminimally coupled scalar field theory and 
first-order thermodynamics of gravity,'' Phys. Rev. D \textbf{106} (2022) 
no.8, 084008 doi:10.1103/PhysRevD.106.084008 [arXiv:2207.03841 [gr-qc]].

\bibitem{Giardino:2023ygc}
S.~Giardino and A.~Giusti,
``First-order thermodynamics of scalar-tensor gravity,''
Ricerche mat. (2023)
[arXiv:2306.01580 [gr-qc]].

\bibitem{Tupper74} B. O. J. Tupper, "Common solutions of the Einstein and 
Brans-Dicke theories." International Journal of Theoretical Physics 11 
(1974), 353-356.

\bibitem{Ray:1977nr} D.~Ray, ``Plane Wave Solutions in Scalar Tensor 
Theories and Solutions of Source-Free Einstein-Maxwell Theory,'' J. Math. 
Phys. \textbf{18}, 245-249 (1977) doi:10.1063/1.523264

\bibitem{Gurses:1978zs} M.~Gurses and M.~Halilsoy, ``pp Waves in the 
Generalized Einstein Theories,'' Phys. Lett. A \textbf{68}, 182-184 (1978) 
doi:10.1016/0375-9601(78)90797-1

\bibitem{Bressange:1997ey} G.~F.~Bressange, ``Plane lightlike shells and 
impulsive gravitational waves in scalar--tensor theories of gravity,'' 
Class. Quant. Grav. \textbf{15}, 225-238 (1998) 
doi:10.1088/0264-9381/15/1/017 [arXiv:gr-qc/9710082 [gr-qc]].

\bibitem{Gurses:2016vwm} M.~G\"urses and \c{C}.~\c{S}ent\"urk, ``A 
Modified Gravity Theory: Null Aether,'' Commun. Theor. Phys. \textbf{71}, 
no.3, 312-326 (2019) doi:10.1088/0253-6102/71/3/312 [arXiv:1604.02266 
[gr-qc]].

\bibitem{Racsko:2018xcw} B.~Racsk\'o and L.~\'A.~Gergely, ``Light-Like 
Shockwaves in Scalar-Tensor Theories,'' Universe \textbf{4}, no.3, 44 
(2018) doi:10.3390/universe4030044 [arXiv:2006.13010 [gr-qc]].

\bibitem{Siddhant:2020gkn} S.~Siddhant, I.~Chakraborty and S.~Kar, ``Kundt 
geometries and memory effects in the Brans-Dicke theory of gravity,'' Eur. 
Phys. J. C \textbf{81}, no.4, 350 (2021) 
doi:10.1140/epjc/s10052-021-09118-4 [arXiv:2011.12368 [gr-qc]].

\bibitem{Gurses:2021ogc} M.~Gurses, Y.~Heydarzade and C.~Senturk, 
``Kerr\textendash{}Schild\textendash{}Kundt metrics in generic gravity 
theories with modified Horndeski couplings,'' Eur. Phys. J. C \textbf{81}, 
no.12, 1147 (2021) doi:10.1140/epjc/s10052-021-09947-3 [arXiv:2109.01244 
[gr-qc]].

\bibitem{Poisson:2009pwt} E.~Poisson, {\em A Relativist's Toolkit: The 
Mathematics of Black-Hole Mechanics} (Cambridge University Press, 
Cambridge, 2009), doi:10.1017/CBO9780511606601

\bibitem{Faraoni:2018fil}
V.~Faraoni and J.~C\^ot\'e,
``Scalar field as a null dust,''
Eur. Phys. J. C \textbf{79}, no.4, 318 (2019)
doi:10.1140/epjc/s10052-019-6829-x
[arXiv:1812.06457 [gr-qc]].

\bibitem{Martin-Moruno:2017iqw} P.~Mart\'\i{}n-Moruno and M.~Visser, 
``Generalized Rainich conditions, generalized stress-energy conditions, 
and the Hawking-Ellis classification,'' Class. Quant. Grav. \textbf{34}, 
no.22, 225014 (2017) doi:10.1088/1361-6382/aa9039 [arXiv:1707.04172 
[gr-qc]].

\bibitem{Martin-Moruno:2018eil} P.~Martin-Moruno and M.~Visser, 
``Essential core of the Hawking\textendash{}Ellis types,'' Class. Quant. 
Grav. \textbf{35}, no.12, 125003 (2018) doi:10.1088/1361-6382/aac147 
[arXiv:1802.00865 [gr-qc]].

\bibitem{Martin-Moruno:2021niw} P.~Martin-Moruno and M.~Visser, 
``Hawking-Ellis classification of stress-energy tensors: Test fields 
versus backreaction,'' Phys. Rev. D \textbf{103}, no.12, 124003 (2021) 
doi:10.1103/PhysRevD.103.124003 [arXiv:2102.13551 [gr-qc]].

\bibitem{Ellis:1971pg} G.~F.~R.~Ellis, ``Relativistic 
cosmology,'' Proc. Int. Sch. Phys. Fermi \textbf{47}, 104--182 (1971).  
Reprinted in Gen. Relativ. Gravit. {\bf 41}, 581--660 (2009), 
doi:10.1007/s10714-009-0760-7

\bibitem{TheLIGOScientific:2017qsa} B.~P.~Abbott \textit{et al.} [LIGO 
Scientific and Virgo], ``GW170817: Observation of Gravitational Waves from 
a Binary Neutron Star Inspiral,'' \emph{Phys. Rev. Lett.} \textbf{119}, 
no.16, 161101 (2017) doi:10.1103/PhysRevLett.119.161101 [arXiv:1710.05832 
[gr-qc]].

\bibitem{Monitor:2017mdv} B.~P.~Abbott \textit{et al.} [LIGO Scientific, 
Virgo, Fermi-GBM and INTEGRAL], ``Gravitational Waves and Gamma-rays from 
a Binary Neutron Star Merger: GW170817 and GRB 170817A,'' \emph{Astrophys. 
J. Lett.} \textbf{848}, no.2, L13 (2017) doi:10.3847/2041-8213/aa920c 
[arXiv:1710.05834 [astro-ph.HE]].

\bibitem{Eckart:1940te} C.~Eckart, ``The Thermodynamics of irreversible 
processes. 3.~Relativistic theory of the simple fluid,'' Phys. Rev. 
\textbf{58}, 919-924 (1940), doi:10.1103/PhysRev.58.919

\bibitem{Hayward:1993ph}
S.~A.~Hayward,
``Quasilocal gravitational energy,''
Phys. Rev. D \textbf{49}, 831-839 (1994)
doi:10.1103/PhysRevD.49.831
[arXiv:gr-qc/9303030 [gr-qc]].

\end{thebibliography}
\end{document}